\newcommand{\ba}{\begin{eqnarray}}
\newcommand{\be}{\begin{equation}}
\newcommand{\ee}{\end{equation}}
\newcommand{\beq}{\begin{equation}}
\newcommand{\eeq}{  \end{equation}}
\newcommand{\bea}{\begin{eqnarray}}
\newcommand{\eea}{  \end{eqnarray}}
\newcommand{\ea}{\end{eqnarray}}
\newcommand{\ban}{\begin{eqnarray*}}
\newcommand{\ean}{\end{eqnarray*}}
\newcommand{\Tr}{\operatorname{tr}}
\newcommand{\ket}[1]{\left|#1\right\rangle}
\newcommand{\expect}[1]{\langle#1\rangle}
\newcommand{\figref}[1]{Fig.~\ref{#1}}
\newtheorem*{rep@theorem}{\rep@title}
\newcommand{\newreptheorem}[2]{%
\newenvironment{rep#1}[1]{%
 \def\rep@title{#2 \ref{##1}}%
 \begin{rep@theorem}}%
 {\end{rep@theorem}}}
\theoremstyle{plain}
\newtheorem{fakt}{Fact}
\theoremstyle{definition}
\theoremstyle{remark}
\newcommand{\fb}[1]{\textcolor{red}{[#1]}}
\begin{document}

\title{Scalable Bell inequalities for qubit graph states and robust self-testing}

\author{F. Baccari}
\affiliation{ICFO - Institut de Ciencies Fotoniques, The Barcelona Institute of Science and Technology, 08860 Castelldefels (Barcelona), Spain}

\author{R. Augusiak}
\affiliation{Center for Theoretical Physics, Polish Academy of Sciences, Aleja Lotnik\'ow 32/46, 02-668 Warsaw, Poland}

\author{I. \v{S}upi\'c}
\affiliation{ICFO - Institut de Ciencies Fotoniques, The Barcelona Institute of Science and Technology, 08860 Castelldefels (Barcelona), Spain}
\affiliation{D{\'{e}}partement de Physique Appliqu\'{e}e, Universit\'{e} de Gen\`{e}ve, 1211 Gen\`{e}ve, Switzerland}

\author{J. Tura}  
\affiliation{Max-Planck-Institut f\"ur Quantenoptik, Hans-Kopfermann-Stra{\ss}e 1, 85748 Garching, Germany}

\author{A. Ac\' in}
\affiliation{ICFO - Institut de Ciencies Fotoniques, The Barcelona Institute of Science and Technology, 08860 Castelldefels (Barcelona), Spain}
\affiliation{ICREA - Institucio Catalana de Recerca i Estudis Avan\c cats, Pg. Lluis Companys 23, 08010 Barcelona, Spain}

\begin{abstract}
Bell inequalities constitute a key tool in quantum information theory: they not only allow one to reveal nonlocality in composite quantum systems, but, more importantly, they can be used to certify relevant properties thereof. We provide a very simple and intuitive construction of Bell inequalities that are maximally violated by the multiqubit graph states and can be used for their robust self-testing. The main advantage of our inequalities over previous constructions for these states lies in the fact that the number of correlations they contain scales only linearly with the number of observers, which presents a significant reduction of the experimental effort needed to violate them. We also discuss possible generalizations of our approach by showing that it is applicable to entangled states whose stabilizers are not simply tensor products of Pauli matrices. 
\end{abstract}

\date{\today}

\maketitle

\textit{Introduction.} Since the first discovery by Bell more than 50 years ago, it has been known that quantum particles are capable of sharing correlations that cannot be reproduced by any classical means \cite{bell1964on,brunner2014bell}. Such property, usually referred to as nonlocality, has attracted a lot of attention, not only because of its fundamental interest, but also because of its applications in quantum technologies.
In particular, nonlocality is a key ingredient in the framework of device-independent protocols, among which the best known applications are in the field of cryptography and randomness certification 	\cite{mayers1998quantum,acin2007device,colbeck2006quantum,pironio2010random,acinmasanes2016}.

Another relevant application of nonlocality is self-testing, which can be seen as a way of certifying both the state produced and the local measurements performed by some given quantum devices, by simply looking at the resulting correlations. Such a tool is particularly interesting because it offers a way to guarantee that the devices are working properly without the need of knowing their internal functioning. It thus consitutes a form of device-independent certification that can be useful for various quantum information protocols. In fact, since its introduction in \cite{mayers2004self-testing}, self-testing has been studied in many contexts, showing to be applicable to multipartite states \cite{mckague2011self,supic2017simple} and any number of measurements as well \cite{supic2016self}. Moreover, extensions to several different scenarios have also been considered, such as steering \cite{supic2016steering}, prepare-and measure framework \cite{tavakoli2018self}, networks \cite{bowles2018device,bowles2018self} and the certification of quantum channels \cite{sekatski2018certifying}.

From an implementation perspective, a relevant challenge is to design self-testing strategies that can be applied to realistic situations. Since recent experiments are capable of addressing already tens of particles \cite{wang2018qubit,friis2018observation} a crucial ingredient for a certifying strategy is to present an efficient scaling in terms of the required resources. Indeed, any method that is based on the full information about either the state or the observed correlations is bound to become intractable already for medium-large systems, since such information scales exponentially with the number of particles involved.
Interestingly, it has already been shown that nonlocality can be assessed with the knowledge of few-body correlations only \cite{tura2014detecting,tura2015nonlocality,tura2014translationally,tura2017energy,baccari2017efficient}, which require generally a polynomial scaling number of measurements to be estimated. Moreover, Bell inequalities that consist of a constant amount of terms have also been introduced \cite{tura2014detecting,tura2015nonlocality}, opening the way to the first experimental detections of Bell correlations in many-body systems of hundreds \cite{schmied2016bell} and hundreds of thousands of atoms \cite{engelsen2017bell}.

Since nonlocality is a necessary ingredient for self-testing, a relevant question to ask is which family of multipartite states can be self-tested using a polynomial amount of information about the observed correlations. Here, we address this question by focusing on graph states, one of the most representative subsets of multipartite entangled states that include, for instance, Greenberger-Horne-Zeilinger (GHZ) and cluster states. In particular, we introduce the first scalable self-testing method for graph states based on Bell inequalities. This implies introducing a new family of Bell inequalities maximally violated by graph states whose violation, contrary to previous constructions \cite{guhne2005bell,toth2006twosetting,guhne2008generalized}, can be estimated by measuring a number of correlations that scales linearly with the number of particles. While other works have already proven self-testing for these states with a similar amount of information \cite{mckague2011self,supic2017simple} the main novelty of our result is its connection to the violation of a Bell inequality, which is an advantage that can be exploited for further applications. Here we focus on two: first, we apply the techniques from \cite{kaniewski2016analytic} to show that our inequalities can self-test the graph states in a robust way, and second, we analyse generalisations of our method that allow us to derive Bell inequalities useful for self-testing  multiqubit states that are not stabilizer states.



\textit{Preliminaries.} Before presenting our results, we first set up the scenario and introduce the relevant notation and terminology. We consider the simplest $N$-partite Bell scenario, referred to as $(N,2,2)$ scenario, in which $N$ distant observers share some $N$-partite quantum state $\ket{\psi}$. On their share of the state, observer $i$ measures one of two dichotomic observables, denoted $A_{x_i}^{(i)}$ with $x_i=0,1$ and $i=1,\ldots,N$, whose outcomes are labelled $\pm 1$ (in the few-party case we will also denote the observables by $A_i$, $B_i$ etc.). 
The correlations obtained in this experiment are described by a collection of
expectation values
\begin{equation}\label{eq:expvalues}
\langle A_{x_{i_1}}^{(i_1)}\ldots A_{x_{i_k}}^{(i_k)}\rangle=\langle \psi|A_{x_{i_1}}^{(i_1)}\otimes \ldots\otimes A_{x_{i_k}}^{(i_k)}|\psi\rangle
\end{equation}
which are usually referred to as correlators and we arrange them for our convenience in a vector $\vec{c}$. 
Such quantum correlations form a convex set, denoted $Q_N$.
%
Noticeably, it contains correlations that, even if obtained from a quantum state and quantum measurements, can be simulated in a purely classical way. Such correlations are said to admit a local hidden variable (LHV) model and are shortly called \textit{local} or \textit{classical}. They form a convex polytope, denoted $P_N$. Yet, quantum theory offers also correlations that escape the description in terms of LHV models and Bell was the first to reveal that \cite{bell1964on}. To this end, he used certain inequalities---so-called Bell inequalities---whose general form in the $(N,2,2)$ scenario is 
\begin{equation}\label{eq:Bellineq}
I:=\sum_{k=1}^N\sum_{\substack{1 \leq i_1<i_2<\ldots<i_k \leq N  \\  x_{i_1},\ldots,x_{i_N}=0,1 }} \alpha_{x_1, \ldots x_N}^{i_1, \ldots i_k} \langle A_{x_{i_1}}^{(i_1)}\ldots A_{x_{i_k}}^{(i_k)}\rangle\leq \beta_C
\end{equation}
with $\beta_C$ being the classical bound defined as $\beta_C=\max_{P_N}I$. 
Correlations $\vec{c}$ that violate a Bell inequality cannot be reproduced by any LHV model and are therefore termed nonlocal. The best known example of a Bell inequality, defined in the $(2,2,2)$ scenario, is known as Clauser-Horne-Shimony-Holt (CHSH)~\cite{clauser1969proposed} and reads
\begin{equation}\label{eq:CHSH}
I_{\mathrm{CHSH}}:= \expect{(A_0+A_1) B_0} + \expect{(A_0-A_1) B_1} 
\leq 2,
%
\end{equation}
where $A_x$ and $B_y$ $(x,y=0,1)$ are dichotomic observables measured by the respective observers. Its maximal quantum value is $2 \sqrt{2}$ and is achieved by the maximally 
entangled state of two qubits $\ket{\phi_+}=(\ket{00}+\ket{11})/\sqrt{2}$ and the 
observables $A_i=[\sigma_X+(-1)^i\sigma_Z]/\sqrt{2}$, and $B_0=\sigma_X$ and $B_1=\sigma_Z$. Here, $\sigma_X$ and $\sigma_Z$ are the Pauli operators. 


Let us finally recall the definition of the multi-qubit graph states. Consider a graph $G=(V,E)$, where $V$ is the set of vertices of size $|V|=N$, and $E$ is the set of edges connecting the vertices. Let then $n(i)$ be the neighbourhood of 
the vertex $i$, i.e., all vertices from $V$ that are connected with $i$ by an edge. Now, to every vertex $i$ we associate an operator
\begin{equation}
G_i={\sigma_X}_i\otimes \bigotimes_{j\in n(i)} {\sigma_Z}_j,
\end{equation}
in which the $\sigma_X$ operator acts on site $i$, while the $\sigma_Z$ operators 
act on all sites that belong to $n(i)$.
Then, the graph state $\ket{\psi_G} $ associated to $G$ is defined as the unique eigenstate of all these operators $G_i$ $(i=1,\ldots,N)$ with eigenvalue one. The $G_i$'s are called stabilizing operators of $\ket{\psi_G}$ and they generate $2^N$-element commutative group of operators stabilizing $\ket{\psi_{G}}$, called \textit{stabilizer group}. 

The simplest example of a graph state, corresponding to the two-vertex complete graph up to local unitary equivalence, is precisely the maximally entangled state of two qubits $\ket{\phi_+}$. Let us stress here that there is a direct relation between the stabilizing operators of $\ket{\phi_+}$ and the maximal quantum violation of the CHSH Bell inequality. Precisely, the observables realising the maximal quantum violation of the CHSH Bell inequality can be combined to obtain $(A_0+A_1)\otimes B_0=\sqrt{2}\sigma_X\otimes \sigma_X$ and $(A_0-A_1)\otimes B_1=\sqrt{2}\sigma_Z\otimes \sigma_Z$, 
which constitute, up to a constant factor, the stabilizing operators of $\ket{\phi_+}$. This is exactly the relation that we exploit below to construct Bell inequalities for graph states. In fact, our inequalities can be seen as a generalisation of the CHSH Bell inequality to the multipartite case.


\textit{CHSH-like Bell inequalities for graph states.}
%
We are now ready to present our main results. We begin with a new family of Bell inequalities maximally violated by the graph states. Let us consider a graph $G$ and, for our convenience, enumerate its vertices so that the first one has the largest neighbourhood, that is, $| n(1) | =\max_{i} | n(i) | \equiv n_{\max}$. If there are more vertices with maximal neighbourhood in $G$, we choose any of them as the first vertex. 

Then, to every stabilizing operator $G_i$ corresponding to the graph $G$ we associate an expectation value in which the respective operators are replaced by quantum dichotomic observables or combinations thereof. More precisely, at the first site ${\sigma_X}_1$ and ${\sigma_Z}_1$ are replaced by, respectively, $A_0^{(1)}+A_1^{(1)}$ and $A_0^{(1)} - A_1^{(1)}$, whereas for the remaining observers, ${\sigma_X}_j$ and ${\sigma_Z}_j$ are replaced simply by $A_0^{(j)}$ and $A_1^{(j)}$.
Finally, if there is an identity at any position in $G_i$ we leave it as it is. 

We then add the obtained correlators, multiplying the first one by $n_{\max}$, and obtain the following Bell inequality
\begin{eqnarray}\label{eq:inequalities}
I_G:&=&n_{\max}\left\langle (A_0^{(1)}+A_1^{(1)})\prod_{i\in n(1)}A_{1}^{(i)}\right\rangle\nonumber\\
&&+\sum_{i\in n(1)}
\left\langle (A_0^{(1)}-A_1^{(1)})A_0^{(i)}\prod_{j\in n(i)\setminus\{1\}}A_1^{(j)}\right\rangle\nonumber\\
&&+\sum_{i\notin n(1)\cup\{1\}}
\left\langle A^{(i)}_0 \prod_{j\in n(i)} A^{(j)}_{1}\right\rangle\leq \beta_G^C .
\end{eqnarray}
%
%
%
Notice that $I_G$ coincides with the CHSH Bell expression for $N = 2$. Similarly, for higher $N$ it can be seen as a sum of $n_{\max}$ CHSH Bell expressions between the first party and some joint measurements on the parties corresponding to the neighbouring vertices, plus some number of correlators in which the first observer does not appear. This simple structure makes our Bell inequalities 
extremely easy to characterize. In fact, as shown below, their maximal classical and quantum values can even be computed by hand. 
\begin{fakt}\label{Obs1}For a given graph $G$,
the classical bound of the corresponding Bell inequality (\ref{eq:inequalities})
is $\beta_G^C=N+n_{\max}-1$. 
\end{fakt}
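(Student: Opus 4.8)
The plan is to exploit the fact that the classical bound $\beta_G^C = \max_{P_N} I_G$, being the maximum of a linear functional over the local polytope $P_N$, is attained at one of its vertices, i.e.\ at a deterministic local assignment; it therefore suffices to maximise $I_G$ over all choices of values in $\{-1,+1\}$ for the $2N$ observables. To lighten the notation I would write $a_j := A_0^{(j)}$ and $a_j' := A_1^{(j)}$, so that each correlator in \eqnref{eq:inequalities} becomes a signed product of the corresponding $a$'s and $a'$'s.

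The one genuinely useful observation --- precisely the one behind the CHSH bound --- is that $a_1 + a_1'$ and $a_1 - a_1'$ can never both be nonzero: if $a_1 = a_1'$ then $a_1 - a_1' = 0$ and $a_1 + a_1' = \pm 2$, whereas if $a_1 \neq a_1'$ then $a_1 + a_1' = 0$ and $a_1 - a_1' = \pm 2$. Consequently the first two lines of \eqnref{eq:inequalities} together contribute, in absolute value, either $n_{\max}\,|a_1 + a_1'|\,\bigl|\prod_{i\in n(1)} a_i'\bigr| = 2 n_{\max}$ (when $a_1 = a_1'$) or $|a_1 - a_1'|\,\bigl|\sum_{i \in n(1)} a_i \prod_{j\in n(i)\setminus\{1\}} a_j'\bigr| \leq 2 |n(1)| = 2 n_{\max}$ (when $a_1 \neq a_1'$); in either case the contribution is at most $2 n_{\max}$. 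The third line of \eqnref{eq:inequalities} is a sum of correlators of $\pm 1$-valued quantities indexed by the vertices outside $n(1) \cup \{1\}$, of which there are $N - n_{\max} - 1$, so it is bounded in absolute value by $N - n_{\max} - 1$. Adding the two estimates gives $I_G \leq 2 n_{\max} + (N - n_{\max} - 1) = N + n_{\max} - 1$, hence $\beta_G^C \leq N + n_{\max} - 1$.

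For the reverse inequality I would simply exhibit a deterministic strategy attaining this value, and the most naive one already works: setting $A_0^{(j)} = A_1^{(j)} = +1$ for every $j$ makes the first line of \eqnref{eq:inequalities} equal to $2 n_{\max}$, the second line vanish, and the third line equal to $N - n_{\max} - 1$, so that $I_G = N + n_{\max} - 1$ exactly. Combined with the upper bound this yields $\beta_G^C = N + n_{\max} - 1$.

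I do not anticipate a real obstacle here: the argument is elementary once the CHSH-type cancellation between the $(a_1 + a_1')$ and $(a_1 - a_1')$ blocks is spotted. The two points that need a moment's care are the bookkeeping in the last sum --- there are exactly $N - 1 - n_{\max}$ vertices outside $n(1) \cup \{1\}$, using that $G$ is a simple graph so $1 \notin n(1)$ --- and the fact that the estimate $2 n_{\max}$ for the second line relies on $|n(1)| = n_{\max}$, which is where the convention of labelling a maximal-degree vertex as the first one is used.
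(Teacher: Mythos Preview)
Your proof is correct and follows essentially the same line as the paper's: the CHSH-type observation that exactly one of $A_0^{(1)}\pm A_1^{(1)}$ is nonzero (and equals $\pm 2$) bounds the first two blocks by $2n_{\max}$, while the remaining $N-n_{\max}-1$ correlators are trivially bounded by~$1$ each. Your explicit exhibition of the all-ones deterministic strategy attaining the bound is a welcome addition that the paper's proof leaves implicit.
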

\begin{proof}We start by noting that (\ref{eq:inequalities}) consists of a 
single term containing $A_0^{(1)}+A_1^{(1)}$ appearing with weight $n_{\max}$, 
and $n_{\max}$ different terms containing  $A_0^{(1)}-A_1^{(1)}$. Now, for any 
local deterministic correlations that assign $\pm 1$ to all observables $A_{x_j}^{(j)}$, these two combinations are either zero or two and if one equals two, the other vanishes. Thus, the contribution from these terms to the classical bound is exactly $2n_{\max}$. Then, the maximal value of the remaining correlators in (\ref{eq:inequalities}) over local deterministic strategies is $N-n_{\max}-1$, which together with the first
contribution results in $\beta_G^C=N+n_{\max}-1$. 
\end{proof}

\begin{fakt}\label{Obs2}
For a given graph $G$, the maximal quantum violation of (\ref{eq:inequalities})
is $\beta_G^Q=(2\sqrt{2}-1)n_{\max}+N-1$. 
\end{fakt}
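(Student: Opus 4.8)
The plan is to prove $\beta_G^Q=(2\sqrt{2}-1)n_{\max}+N-1$ by establishing matching upper and lower bounds, both of which can be obtained by hand. For the \emph{upper bound}, I would split the weight-$n_{\max}$ correlator in the first line of (\ref{eq:inequalities}) into $n_{\max}$ identical copies and pair the $k$-th copy with the $k$-th summand of the second line: writing $n(1)=\{i_1,\dots,i_{n_{\max}}\}$, I pair $\langle(A_0^{(1)}+A_1^{(1)})\prod_{i\in n(1)}A_1^{(i)}\rangle$ with $\langle(A_0^{(1)}-A_1^{(1)})A_0^{(i_k)}\prod_{j\in n(i_k)\setminus\{1\}}A_1^{(j)}\rangle$. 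Setting $\mathcal{B}_0:=\prod_{i\in n(1)}A_1^{(i)}$ and $\mathcal{B}_1^{(k)}:=A_0^{(i_k)}\prod_{j\in n(i_k)\setminus\{1\}}A_1^{(j)}$, each pair has the CHSH form $\langle A_0^{(1)}\mathcal{B}_0+A_1^{(1)}\mathcal{B}_0+A_0^{(1)}\mathcal{B}_1^{(k)}-A_1^{(1)}\mathcal{B}_1^{(k)}\rangle$. Since $A_0^{(1)},A_1^{(1)}$ act only on site $1$ whereas $\mathcal{B}_0,\mathcal{B}_1^{(k)}$ act on the complementary sites, they commute, and $\mathcal{B}_0,\mathcal{B}_1^{(k)}$ are Hermitian and square to the identity (being products of $\pm1$ observables on pairwise distinct sites), so Tsirelson's bound gives at most $2\sqrt{2}$ per pair in any quantum state. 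The $N-n_{\max}-1$ correlators of the third line are expectation values of products of $\pm1$ observables, hence each at most $1$; adding up yields $I_G\le 2\sqrt{2}\,n_{\max}+(N-n_{\max}-1)=(2\sqrt{2}-1)n_{\max}+N-1$.

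For \emph{achievability}, I would saturate this bound with the natural quantum realisation: take the graph state $\ket{\psi_G}$, choose at site $1$ the CHSH-optimal observables $A_0^{(1)}=(\sigma_X+\sigma_Z)/\sqrt{2}$ and $A_1^{(1)}=(\sigma_X-\sigma_Z)/\sqrt{2}$ (so that $A_0^{(1)}+A_1^{(1)}=\sqrt{2}\,\sigma_X$ and $A_0^{(1)}-A_1^{(1)}=\sqrt{2}\,\sigma_Z$), and set $A_0^{(j)}=\sigma_X$, $A_1^{(j)}=\sigma_Z$ for every $j\neq1$. Then the first line of (\ref{eq:inequalities}) evaluates to $\sqrt{2}\,n_{\max}\langle G_1\rangle$, each of the $n_{\max}$ summands of the second line to $\sqrt{2}\langle\sigma_Z^{(1)}\sigma_X^{(i)}\prod_{j\in n(i)\setminus\{1\}}\sigma_Z^{(j)}\rangle=\sqrt{2}\langle G_i\rangle$ (using that $i\in n(1)$ is equivalent to $1\in n(i)$, so the displayed operator is exactly $G_i$), and each of the $N-n_{\max}-1$ summands of the third line to $\langle G_i\rangle$. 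As $\ket{\psi_G}$ is the common $+1$ eigenstate of all stabilizers, every term attains its maximum and $I_G=\sqrt{2}\,n_{\max}+\sqrt{2}\,n_{\max}+(N-n_{\max}-1)=(2\sqrt{2}-1)n_{\max}+N-1$, matching the upper bound.

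The computation is short, so I do not expect a genuine obstacle; the only delicate parts are bookkeeping. In the upper bound one must make sure Tsirelson's bound really applies to every CHSH pair even though distinct pairs reuse overlapping neighbour sites inside the operators $\mathcal{B}_0,\mathcal{B}_1^{(k)}$ — which is fine, since all of them commute with the site-$1$ observables and square to the identity. In the achievability part one must correctly identify each correlator of (\ref{eq:inequalities}) with its stabilizer $G_i$, in particular keeping track of the extra $\sigma_Z$ on site $1$ that appears precisely for $i\in n(1)$. I would also note that the same upper bound, and more, follows from an explicit sum-of-squares decomposition certifying that the Bell operator associated with $I_G$ is bounded above by $\beta_G^Q$, obtained by summing the standard single-CHSH sum-of-squares over the $n_{\max}$ blocks and adding one square for each remaining correlator; this is the packaging better suited to the robust self-testing result that follows.
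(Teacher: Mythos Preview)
Your proposal is correct and follows essentially the same route as the paper. The paper packages the upper bound directly as the sum-of-squares decomposition you mention in your final paragraph (with terms $\tfrac{n_{\max}}{\sqrt{2}}(\mathbbm{1}-P_1)^2$, $\tfrac{1}{\sqrt{2}}(\mathbbm{1}-P_i)^2$ for $i\in n(1)$, and $\tfrac12(\mathbbm{1}-P_i)^2$ for the rest), while you first invoke Tsirelson's bound on each of the $n_{\max}$ CHSH blocks and bound the remaining correlators by $1$; these are the same argument, and your achievability part is identical to the paper's.
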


\begin{proof} We first demonstrate that $\beta_{G}^Q$ upper bounds the maximal quantum value of (\ref{eq:inequalities}) and then provide an explicit quantum strategy achieving this bound.

Let us consider a Bell operator $\mathcal{B}_G$ obtained from the Bell expression $I_G$. 
It can be shown that the shifted Bell operator $\beta_G^Q\mathbbm{1}-\mathcal{B}_G$ is positive semidefinite for any choice of local observables $A^{(i)}_{x_i}$, 
meaning that $\beta_{G}^Q$ upper bounds the maximal quantum 
violation of inequality (\ref{eq:inequalities}). This is achieved by writing this operator in the form of a sum of squares decomposition (see Appendix~A).
To complete the proof let us now provide an explicit quantum strategy 
for which the value of $I_G$ equals exactly $\beta_G^Q$. To this end, 
we choose the following observables
%
%
$A_0^{(1)}=(\sigma_X+\sigma_Z)/\sqrt{2}$ and $A_1^{(1)}=(\sigma_X-\sigma_Z)/\sqrt{2}$ for the first observer, and
$A_0^{(i)}=\sigma_X$ and $A_1^{(i)}=\sigma_Z$ for the remaining ones.
%
By the very definition of the state $|\psi_G\rangle$ corresponding to the graph $G$, 
it is not difficult to see that for these observables and $|\psi_{G}\rangle$, the value of every correlator in (\ref{eq:inequalities}) containing combinations of the observables $A_{x_1}^{(1)}$ is $\sqrt{2}$, whereas the value of each of the remaining correlators is one. Consequently, $I_G$ for this realisation amounts to $2\sqrt{2}\,n_{\max}+N-n_{\max}-1=(2\sqrt{2}-1)n_{\max}+N-1$, which is exactly $\beta_G^Q$.
\end{proof}

A few comments are in order. First, it follows that for any graph $G$, our Bell inequalities are nontrivial, i.e., $\beta_G^Q>\beta_G^C$. On the other hand, the ratio $\beta_G^Q/\beta_G^C$ tends to a constant value (also when $n_{\max}$ depends on $N$).

\begin{figure}[t!!!]
\centering
\includegraphics[width=\columnwidth]{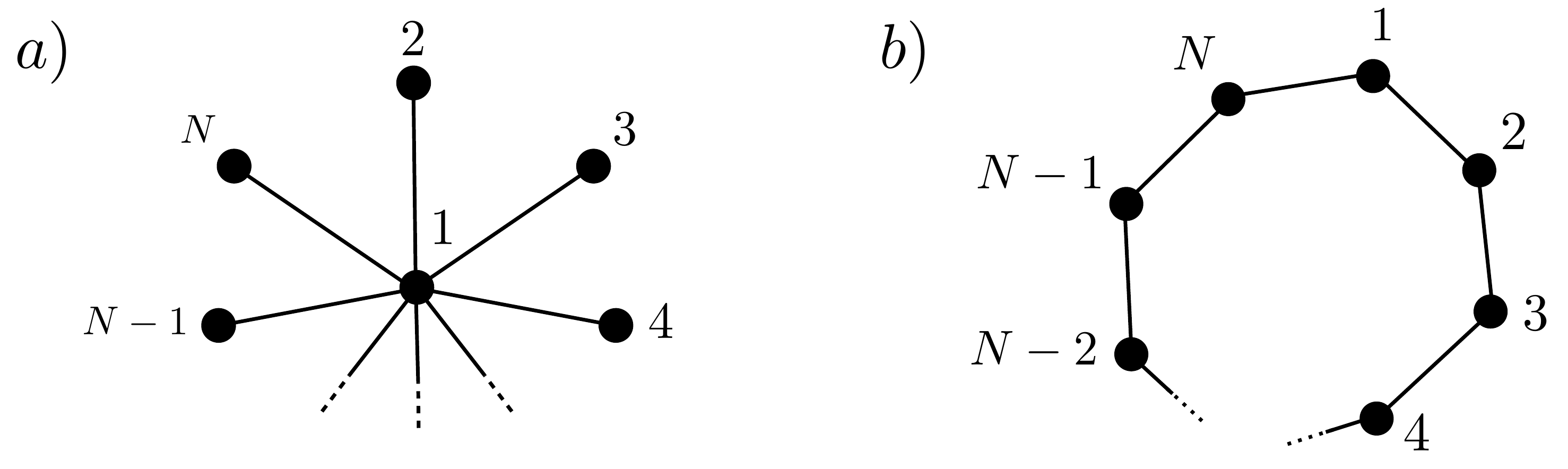}
\caption{Two examples of graphs: (a) the star graph and (b) the ring graph.} 
\label{fig:examples}
\end{figure}

\textit{Examples.}
Let us now illustrate our construction with two examples.
The first one concerns the star graph presented in Fig. \ref{fig:examples}. For this graph $| n(1) | =n_{\mathrm{max}}=N-1$ and the stabilizing operators are of the form: $G_1={\sigma_X}_1 {\sigma_Z}_2\ldots {\sigma_Z}_N$ for the first vertex and $G_i={\sigma_X}_i {\sigma_Z}_1$ with $i=2,\ldots,N$ for the remaining ones. For our convenience, we apply the Hadamard gate to all the vertices but the first one, which gives us an equivalent set of operators: $G_1'={\sigma_X}_1\ldots {\sigma_X}_N$ and $G_i'={\sigma_Z}_1 {\sigma_Z}_i$ with $i=2,\ldots,N$. It follows that they stabilise the $N$-qubit GHZ state
\begin{equation}\label{GHZstate}
|\mathrm{GHZ}_N\rangle=\frac{1}{\sqrt{2}}(\ket{0}^{\otimes N}+\ket{1}^{\otimes N}).
\end{equation}

Let us then associate expectation values to each $G_i'$. As the first vertex is the one with the largest neighbourhood, we make the assignments 
\begin{eqnarray}
G_1'&\to& \langle(A_0^{(1)}+A_1^{(1)})A_0^{(2)}\ldots A_0^{(N)}\rangle \nonumber\\
G_i'&\to& \langle (A_0^{(1)}-A_1^{(1)})A_1^{(i)}\,\rangle \qquad (i=2,\ldots,N)
\end{eqnarray}
%
%
which leads us to the following Bell inequality
\begin{eqnarray}\label{eq:GHZineq}
\mathcal{I}_{GHZ}^{N} &= & (N-1)\left[  \expect{A_{0}^{(1)}  A_{0}^{(2)} {\ldots}  A_{0}^{(N)}} + 
\expect{A_{1}^{(1)}  A_{0}^{(2)} {\ldots}  A_{0}^{(N)}} )\right] \nonumber\\
&&+\sum_{i = 2}^{N} ( \expect{A_{0}^{(1)}  A_{1}^{(i)}} - \expect{A_{1}^{(1)}  A_{1}^{(i)}} )   \leq 2(N-1).
\end{eqnarray}
This inequality was also found in Ref. \cite{baccari2017efficient} using a different approach
and it can be seen as a sum of $N-1$ CHSH Bell inequalities between the first observer
and the remaining ones; for $N=2$ it reproduces the CHSH inequality. It follows from Fact \ref{Obs2} that $\beta_{\mathrm{GHZ}}^Q=2\sqrt{2}(N-1)$ and it is achieved by the GHZ state (\ref{GHZstate}). It should be noticed that contrary to the well-known Mermin Bell inequality \cite{mermin1990extreme} which is also maximally violated by this state, our inequality contains a number of correlators that scales linearly with $N$. Moreover, only two of them are $N$-body and they involve two different measurements choices only for the first party. All this makes our inequality for the GHZ state more advantageous from the experimental point of view.

As a second example we consider the ring graph presented in Fig. \ref{fig:examples}, 
for which the stabilizing operators are $G_i={\sigma_Z}_{i-1}{\sigma_X}_i{\sigma_Z}_{i+1}$ with $i=1,\ldots,N$, where
we use the convention that ${\sigma_Z}_0\equiv {\sigma_Z}_N$ and ${\sigma_Z}_{N+1}\equiv {\sigma_Z}_1$.

As every vertex in this graph has neighborhood of the same size, i.e., 
$n(i)=n_{\max}=2$ $(i=1,\ldots,N)$, we choose the first vertex to be the one at which 
we introduce combinations of observables. Thus, following our recipe, 
\begin{eqnarray}
G_N&\to& \langle A_1^{(N-1)}A_0^{(N)}(A_0^{(1)}-A_1^{(1)})\rangle\nonumber\\
G_1&\to& \langle A_1^{(N)}(A_0^{(1)}+A_1^{(1)})A_1^{(2)}\rangle\nonumber\\
G_2&\to& \langle (A_0^{(1)}-A_1^{(1)})A_0^{(2)}A_1^{(3)}\rangle
\end{eqnarray}
and 
%
$G_i\to \langle A_{1}^{(i-1)}A_{0}^{(i)}A_1^{(i+1)}\rangle$
%
for $i=3,\ldots,N-1$. These expectation values give rise to the following Bell inequality
\begin{eqnarray}\label{eq:ring}
I_{\mathrm{ring}}&:=&2\langle A_1^{(N)}(A_0^{(1)}+A_1^{(1)})A_1^{(2)}\rangle+
\langle (A_0^{(1)}-A_1^{(1)})A_0^{(2)}A_1^{(3)}\rangle\nonumber\\
&&+\langle A_1^{(N-1)}A_0^{(N)}(A_0^{(1)}-A_1^{(1)})\rangle\nonumber\\
&&+\sum_{i=3}^{N-1}\langle A_1^{(i-1)}A_0^{(i)}A_1^{(i+1)} \rangle\leq N+1,
\end{eqnarray}
whose classical bound stems directly from Fact \ref{Obs1}, while, according to Fact \ref{Obs2}, its maximal quantum violation is $N+4\sqrt{2}-3$ and is achieved by the so-called $N$-qubit ring cluster state stabilized by the above $G_i$. Remarkably, this Bell inequality contains
only three-body nearest-neighbour correlators, i.e., correlators of minimal length able to 
detect nonlocality of the ring state \footnote{Recall that as proven in Ref. \cite{gittsovich2010multiparticle} one cannot detect entanglement of graph states only from their two-body marginals as they are compatible with separable states.}.
Lastly, notice that in this second example the ratio $\beta_G^Q/\beta_G^C$ tends to $1$ in the limit of large $N$, making the violation very sensitive to experimental errors for big systems. This issue can be fixed by properly modifying the inequality with the addition of substitutions ${\sigma_X}_j,{\sigma_Z}_j \rightarrow  (A_0^{(j)} \pm A^{(j)}_1)$ on other vertices $j$ whose neighbourhood doesn't overlap with $n(1)$ (see Appendix D for a detailed explanation of the generalised method). 
\begin{figure*}[ht!!!]
\centering
\includegraphics[width=0.4\textwidth]{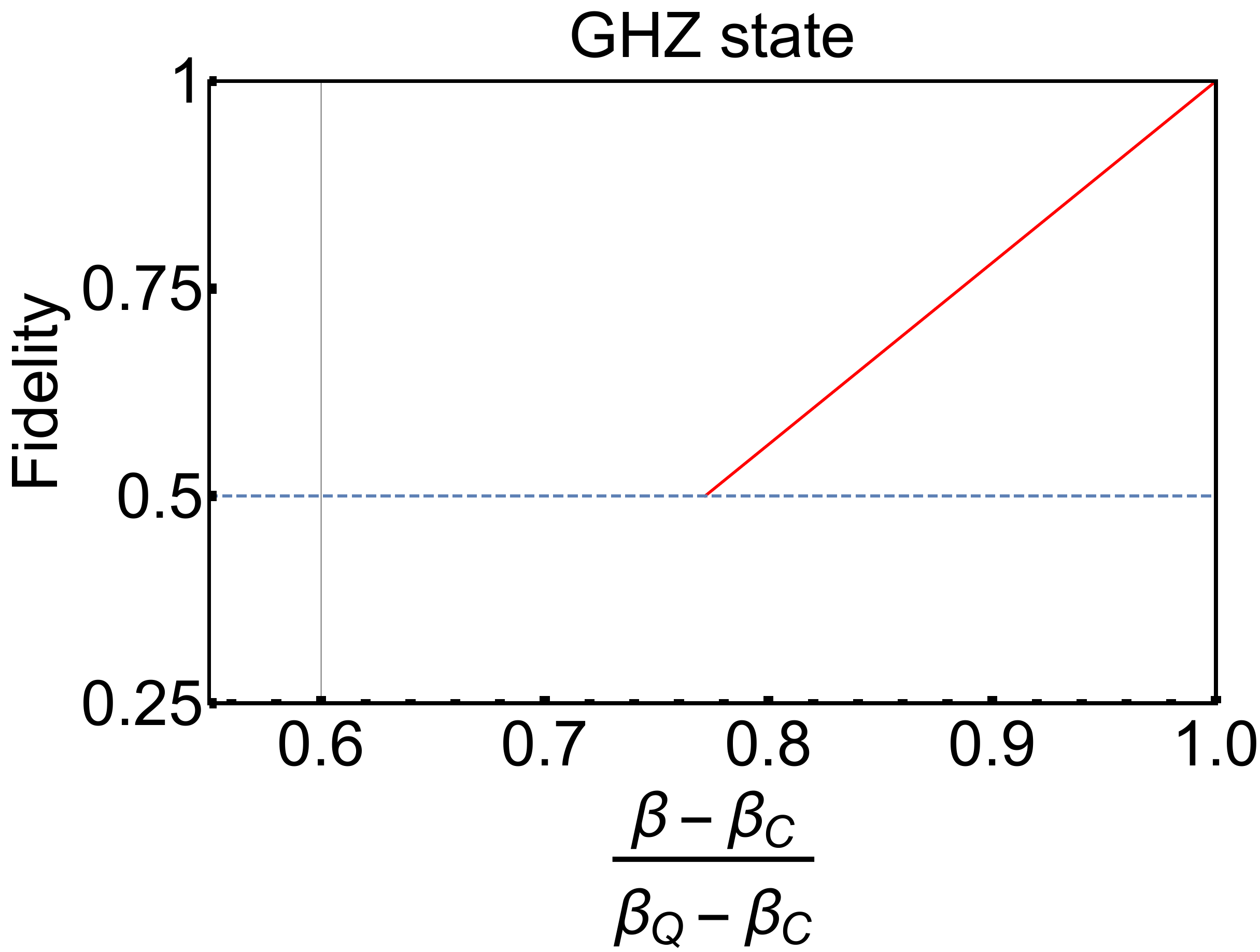}
\includegraphics[width=0.4\textwidth]{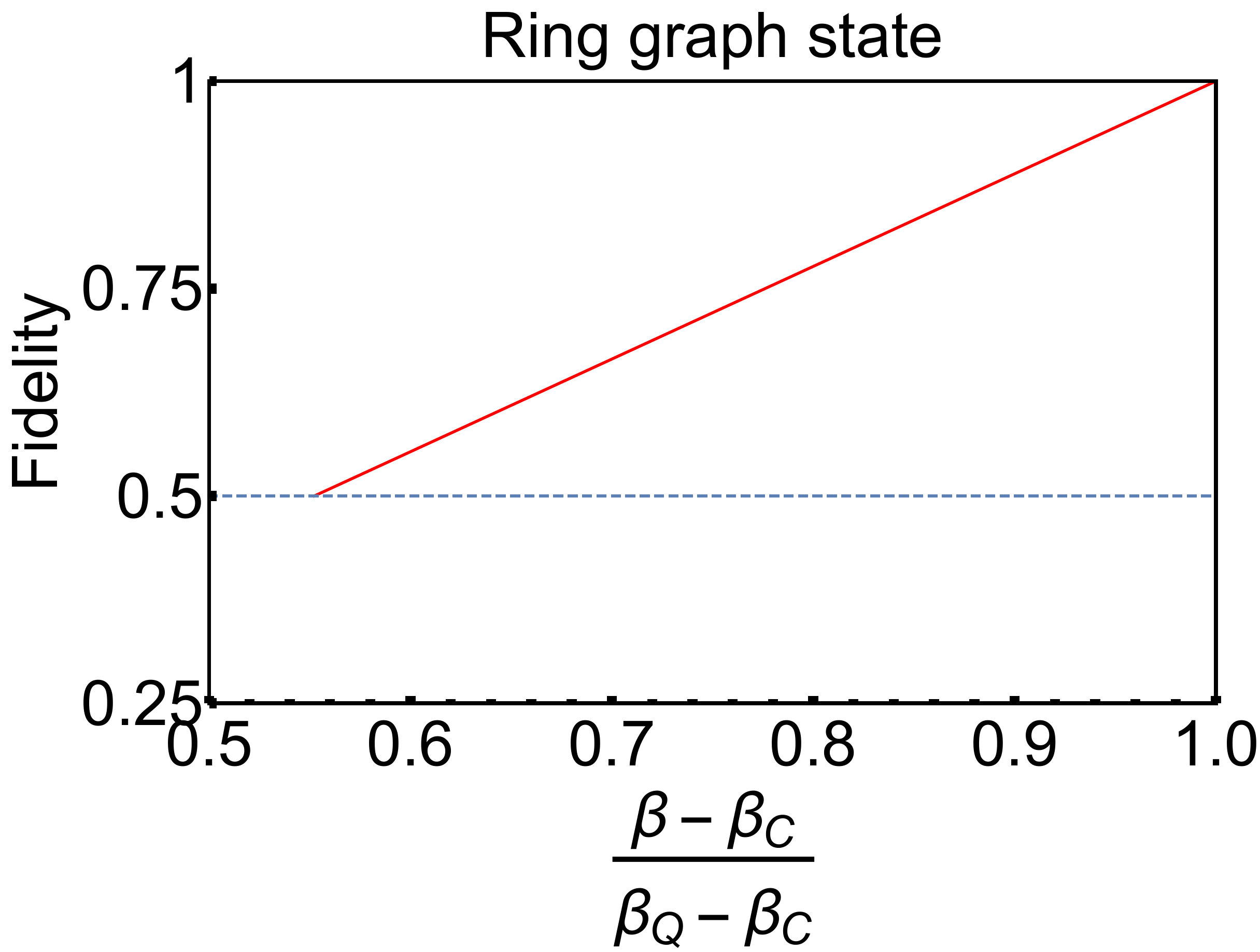}
\caption{Fidelity with the target graph state, numerically estimated as a function of the relative observed violation $(\beta -  \beta_C)/(\beta_Q - \beta_C)$ of the corresponding Bell inequality constructed with our method. The plots show the case of a GHZ state (left) and ring graph state (right) of $N = 7$ particles.} 
\label{fig:plots}
\end{figure*}

\textit{Comparison to other constructions.} Let us compare our inequalities to previous constructions of Bell inequalities for graph states. The most general one was introduced in \cite{guhne2005bell} and then modified in \cite{toth2006twosetting} to allow for two measurements at all sites. 
%
%
One of the key properties of the inequalities of Refs. \cite{guhne2005bell,toth2006twosetting} is that the ratio between their maximal quantum and classical values is exponential in $N$, making them robust against experimental imperfections. 
However, this last feature is only possible due to the fact that the amount of expectation values they contain grows exponentially with $N$, which certainly makes them highly impractical for experiments involving large number of parties. In contrast, our Bell inequalities have a much simpler structure. In particular, they require measuring only $N-n_{\max}-1$ expectation values, which results in an exponential reduction in the experimental effort needed to violate them. The price to pay is, however, that the ratio $\beta_G^Q / \beta_G^C$ tends to a constant for large $N$.

\textit{DI applications. Self-testing.} Apart from being convenient from the experimental point of view, our Bell inequalities also find applications in self-testing. To recall the task of self-testing, imagine a quantum device that performs a Bell test with some quantum state $|\bar\psi\rangle$ and quantum observables $\bar{A}_{x_i}^{(i)}$, producing correlations $\vec{c}$. The aim of self-testing is to reveal the structure of the system $\{|\bar\psi\rangle,\bar{A}_{x_i}^{(i)}\}$ from the violation of the Bell inequality by the observed correlations $\vec{c}$ 

Indeed, we can prove the following fact:

\begin{fakt}\label{Fact3}
Given a graph $G$, if the corresponding Bell inequality (\ref{eq:inequalities})
is violated maximally by a state $|\psi\rangle$ and observables $\bar{A}_j^{(i)}$, then the following holds true:
$|\psi\rangle$ is equivalent, up to local isometries, to the state $\ket{\psi_G}$ associated to the graph $G$ and, similarly, the observables are equivalent to $A_j^{(1)}=[\sigma_X+(-1)^{j}\sigma_Z]/\sqrt{2}$, and $A_0^{(i)}=\sigma_X$ and $A_1^{(i)}=\sigma_Z$ for $i=2,\ldots,N$.
\end{fakt}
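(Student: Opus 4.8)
The plan is to extract the self-testing statement from the sum-of-squares (SOS) certificate underlying Fact~\ref{Obs2}, following the methodology of \cite{kaniewski2016analytic} together with the graph-state arguments of \cite{mckague2011self}. Set $\bar X^{(1)}:=(\bar A_0^{(1)}+\bar A_1^{(1)})/\sqrt2$, $\bar Z^{(1)}:=(\bar A_0^{(1)}-\bar A_1^{(1)})/\sqrt2$, and for $k\ge 2$ put $\bar X^{(k)}:=\bar A_0^{(k)}$, $\bar Z^{(k)}:=\bar A_1^{(k)}$; write $\tilde G_i$ for the operator obtained from $G_i$ by the substitutions $\sigma_X^{(j)}\to\bar X^{(j)}$, $\sigma_Z^{(j)}\to\bar Z^{(j)}$. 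With $P=\prod_{i\in n(1)}\bar A_1^{(i)}$, $R_i=\bar A_0^{(i)}\prod_{j\in n(i)\setminus\{1\}}\bar A_1^{(j)}$ and $U_i=\bar A_0^{(i)}\prod_{j\in n(i)}\bar A_1^{(j)}$, one readily checks the operator identity
\[
\beta_G^Q\mathbbm 1-\mathcal B_G=\tfrac{n_{\max}}{\sqrt2}\bigl(\bar X^{(1)}-P\bigr)^2+\tfrac{1}{\sqrt2}\sum_{i\in n(1)}\bigl(\bar Z^{(1)}-R_i\bigr)^2+\tfrac12\sum_{i\notin n(1)\cup\{1\}}\bigl(\mathbbm 1-U_i\bigr)^2,
\]
which uses only $(\bar A_x^{(i)})^2=\mathbbm 1$ and commutation of operators on distinct sites (in particular $(\bar X^{(1)})^2+(\bar Z^{(1)})^2=2\mathbbm 1$); this is the SOS behind Fact~\ref{Obs2}. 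Maximal violation forces each square to annihilate $|\psi\rangle$, yielding the seed relations $\bar X^{(1)}|\psi\rangle=P|\psi\rangle$, $\bar Z^{(1)}|\psi\rangle=R_i|\psi\rangle$ for $i\in n(1)$, and $U_i|\psi\rangle=|\psi\rangle$ for $i\notin n(1)\cup\{1\}$.

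The next step is to promote these into ``virtual Pauli'' relations on the subspace $\mathcal H_0$ spanned by vectors obtained from $|\psi\rangle$ by applying products of the observables. Since $P,R_i,U_i$ are Hermitian involutions commuting with everything on site~$1$, the seed relations give $(\bar X^{(1)})^2|\psi\rangle=(\bar Z^{(1)})^2|\psi\rangle=|\psi\rangle$ (and likewise on $\mathcal H_0$), while $\{\bar X^{(1)},\bar Z^{(1)}\}=0$ holds as an identity (the anticommutator $\{\bar A_0^{(1)},\bar A_1^{(1)}\}$ cancels). For a neighbour $k\in n(1)$, combining $\bar X^{(1)}|\psi\rangle=P|\psi\rangle$, $\bar Z^{(1)}|\psi\rangle=R_k|\psi\rangle$ with $\{\bar X^{(1)},\bar Z^{(1)}\}=0$ gives $\{P,R_k\}|\psi\rangle=0$; as $P$ and $R_k$ fail to commute only at site~$k$, $\{P,R_k\}$ equals $\{\bar A_0^{(k)},\bar A_1^{(k)}\}$ times an invertible operator not acting on site~$k$, hence $\{\bar A_0^{(k)},\bar A_1^{(k)}\}|\psi\rangle=0$. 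The seed relations also give the stabilizer identities $\tilde G_i|\psi\rangle=|\psi\rangle$ for all $i$ (e.g.\ $\tilde G_i=\bar Z^{(1)}R_i$ for $i\in n(1)$, so $\tilde G_i|\psi\rangle=(\bar Z^{(1)})^2|\psi\rangle=|\psi\rangle$), and these propagate the anticommutation to the remaining vertices: if $k\notin n(1)\cup\{1\}$ has a neighbour $l$, then $\tilde G_k|\psi\rangle=\tilde G_l|\psi\rangle=|\psi\rangle$ force $\bar A_0^{(k)}|\psi\rangle=V|\psi\rangle$, $\bar A_1^{(k)}|\psi\rangle=W|\psi\rangle$ for Hermitian involutions $V,W$ not acting on site~$k$, so $\{\bar A_0^{(k)},\bar A_1^{(k)}\}|\psi\rangle=\{V,W\}|\psi\rangle$, which vanishes once $\{\bar A_0^{(l)},\bar A_1^{(l)}\}|\psi\rangle=0$ because $V,W$ anticommute only at site~$l$. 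Assuming $G$ has no isolated vertex (standard for graph states; an isolated $k$ would leave $\bar A_1^{(k)}$ unconstrained), connectedness spreads the anticommutation from site~$1$ and its neighbours to all sites.

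One then finishes with the usual local SWAP isometry $\Phi=\bigotimes_{i=1}^N\Phi_i$, where $\Phi_i$ is built from the virtual Paulis $\bar X^{(i)},\bar Z^{(i)}$ so that it sends $\bar X^{(i)}|\psi\rangle$, $\bar Z^{(i)}|\psi\rangle$ to $\sigma_X^{(i)}$, $\sigma_Z^{(i)}$ acting on a fresh ancilla qubit; this is well defined on $\mathcal H_0$, where $\bar X^{(i)},\bar Z^{(i)}$ act as anticommuting involutions. The identities $\tilde G_i|\psi\rangle=|\psi\rangle$ become $G_i\Phi(|\psi\rangle)=\Phi(|\psi\rangle)$ on the ancilla register, and since $|\psi_G\rangle$ is the unique common $+1$-eigenstate of all the $G_i$, one concludes $\Phi(|\psi\rangle)=|\mathrm{junk}\rangle\otimes|\psi_G\rangle$. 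Using $\bar A_0^{(1)}=(\bar X^{(1)}+\bar Z^{(1)})/\sqrt2$, $\bar A_1^{(1)}=(\bar X^{(1)}-\bar Z^{(1)})/\sqrt2$ and linearity of $\Phi$ gives $\Phi(\bar A_j^{(i)}|\psi\rangle)=|\mathrm{junk}\rangle\otimes(A_j^{(i)}|\psi_G\rangle)$ with $A_j^{(1)}=[\sigma_X+(-1)^j\sigma_Z]/\sqrt2$ and $A_0^{(i)}=\sigma_X$, $A_1^{(i)}=\sigma_Z$ for $i\ge2$, which is exactly the claimed self-testing. For the robust version used in Fig.~\ref{fig:plots}, I would rerun the argument with an $\epsilon$-suboptimal violation: the SOS bounds each of $\|(\bar X^{(1)}-P)|\psi\rangle\|$, $\|(\bar Z^{(1)}-R_i)|\psi\rangle\|$, $\|(\mathbbm 1-U_i)|\psi\rangle\|$ by $O(\sqrt\epsilon)$, and these errors are tracked through the previous two steps as in \cite{kaniewski2016analytic}.

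I expect the main obstacle to be the second step for vertices far from vertex~$1$: such a vertex enters $I_G$ only as a spectator $\sigma_Z\to A_1$ factor inside correlators of its neighbours, plus the single correlator $\langle A_0^{(i)}\prod_{j\in n(i)}A_1^{(j)}\rangle$, so no isolated CHSH-type sub-argument pins down its observables, and one must propagate the stabilizer relations along a path toward vertex~$1$ while verifying at each step that the spectator factors are invertible and commute with the anticommutator being extracted. A second, routine technicality is that $\bar X^{(1)}$ and $\bar Z^{(1)}$ are not globally unitary, so the SWAP gadget at site~$1$ must be analysed on $\mathcal H_0$ (or after an explicit regularization), precisely as in the bipartite CHSH case.
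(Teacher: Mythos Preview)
Your proposal is correct and follows essentially the same route as the paper's proof in Appendix~B: the same SOS decomposition yields the seed stabilizer relations, anticommutation at site~$1$ holds by construction and is then propagated outward through the graph via those relations, and the SWAP isometry extracts the graph state. The only cosmetic differences are that the paper regularizes $\bar X^{(1)},\bar Z^{(1)}$ explicitly via the polar decomposition (citing \cite{bamps2015sum}) rather than deferring it as a technicality, and completes the SWAP step by an explicit term-by-term reduction to the computational-basis expansion of $|\psi_G\rangle$ (following \cite{supic2017simple}) rather than by invoking uniqueness of $|\psi_G\rangle$ as the common $+1$-eigenstate of the $G_i$.
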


%
%
%
%

\begin{proof}
The proof is in Appendix C.
\end{proof}

It should be noticed here that, compared to other self-testing methods for graph states, we present the first method that exploits the maximal violation of a multipartite Bell inequality. 
Moreover, with the aid of the approach developed in Ref. \cite{kaniewski2016analytic}
our Bell inequalities allow one to make robust self-testing statements. In fact, 
the numerical results in Fig. \ref{fig:plots} show that the fidelity between 
the state $|\psi\rangle$ violating our inequalities for two exemplary graphs
and the corresponding graph state $\ket{\psi_G}$ is a linear function of the value $I_G$.


\textit{Generalizations.} Interestingly, our construction can be generalized so to work in cases where the stabilizer operators are not products of Pauli operators. 
To give an example, let us consider the partially entangled GHZ state
\begin{equation}\label{tiltedGHZ}
|\mathrm{GHZ}_N(\theta)\rangle=\cos\theta\ket{0}^{\otimes N}+\sin\theta\ket{1}^{\otimes N} \, ,
\end{equation}
with $\theta\in(0,\pi/4]$, which is stabilized by $S_1 = \sin{2\theta} {\sigma_X}_1 {\sigma_X}_2 \ldots {\sigma_X}_N +  \cos{2 \theta} {\sigma_Z}_1$ and $S_i = {\sigma_Z}_1 {\sigma_Z}_i$ $(i = 2,\ldots, N)$. To construct a Bell
inequality maximally violated by (\ref{tiltedGHZ}), we associate, as before, an 
expectation value to each stabilizing operator $S_i$: at the first site we substitute
\begin{equation}
{\sigma_X}_1 \to \frac{A_0^{(1)} + A_1^{(1)}}{2\sin{\mu}},\qquad
{\sigma_Z}_1 \to \frac{A_0^{(1)} - A_1^{(1)}}{2\cos{\mu}},
\end{equation}
whereas at the remaining sites we traditionally set ${\sigma_X}_i\to A_0^{(i)}$ and ${\sigma_Z}_i\to A_1^{(i)}$. For $2\sin^2\mu=\sin^22\theta$, we obtain the following Bell inequality
\begin{eqnarray}
\hspace{-1cm}\mathcal{I}_{
\theta}&:=&(N-1)\langle(A_0^{(1)}+A_1^{(1)})A_0^{(2)}\ldots A_0^{(N)}\rangle\nonumber\\
&&+(N-1)\frac{\cos{2\theta}}{\sqrt{1 + \cos^2{2\theta}}}(\langle A_0^{(1)}\rangle-\langle A_1^{(1)}\rangle)\nonumber\\
&&+\frac{1}{\sqrt{1 + \cos^2{2\theta}}}\sum_{i=2}^N \langle (A_0^{(1)}- A_1^{(1)})A_1^{(i)}\rangle\leq \beta_C.
\end{eqnarray}
In Appendix E we prove that this inequality is maximally violated by the state (\ref{tiltedGHZ}) and that it can be used to self-test this state for any $\theta \in (0,\pi/4]$. Noticeably, the case $\theta = \pi/4$ recovers the inequality \eqref{eq:GHZineq} for the GHZ state. Let us also notice that for $N=2$ we obtain a Bell inequality maximally violated by any pure entangled state, which is inequivalent to the well-known tilted CHSH Bell inequality \added{\cite{acinmassarpironio,bamps2015sum}}. 

\textit{Conclusion.} We have introduced a family of Bell inequalities that are maximally violated by the graph states and are scalable from an experimental point of view. That is, contrary to the previous constructions of Bell inequalities for graph states, the number of expectation values they contain grows only linearly with the number of parties. Furthermore, the extremely simple structure of our Bell inequalities makes them easily applicable to robust self-testing.

It is worth pointing out that in constructing our inequalities we follow an approach that, similarly to those developed in Refs. 
\cite{salavrakos2017bell,kaniewski2018maximal}, exploits the quantum properties of the states and
measurements to be self-tested, rather than 
the standard approach based on the geometry of the set of 
local correlations. 

Our considerations provoke
%
%
further questions. First, it would be interesting to see whether
the approach presented here can be generally applied to entangled states stabilized by 
operators that are not just products of Pauli matrices. Here we showed that 
such generalization is possible for partially entangled GHZ states. Second, it 
would be of interest to investigate whether this approach can be exploited 
for multipartite states of higher local dimensions; in particular, the multiqudit graph states, for which no Bell inequalities are known.
Let us also mention that another method to derive Bell inequalities from stabilizing formalism was presented in \cite{sekatski2018certifying}. It would be of interest to explore possible connections between both approaches.

\textit{Acknowledgments.} 
R.~A.~acknowledges the support from the Foundation for Polish Science through the First Team project (First TEAM/2017-4/31) co-financed by the European Union under the European Regional Development Fund. I.~\v{S}., F. B. and A. A. acknowledge the support from Spanish MINECO (QIBEQI FIS2016-80773-P, Severo Ochoa SEV-2015-0522 and a Severo Ochoa PhD fellowship), Fundacio Cellex, Generalitat de Catalunya (SGR1381 and CERCA Program), ERC CoG QITBOX and AXA Chair in Quantum Information Science. I. \v{S}. also acknowledges the support from SNF (Starting grant DIAQ) and COST project CA16218, NANOCOHYBRI. This project has received funding from the European Union's Horizon 2020 research and innovation programme under the Marie-Sk\l{}odowska-Curie grant agreement No 748549.


%

\appendix

\section{Appendix A: Sum of squares decomposition}
\label{AppA}

Here we provide a more detailed proof of Fact \ref{Obs2}. 
For completeness let us also state the fact here.

\setcounter{fakt}{1}

\begin{fakt}\label{app:Obs2}
For a given graph $G$, the maximal quantum violation of (\ref{eq:inequalities})
is $\beta_G^Q=(2\sqrt{2}-1)n_{\max}+N-1$. 
\end{fakt}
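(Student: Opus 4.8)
\emph{Proof strategy.} I would split the statement into the upper bound $\langle\mathcal B_G\rangle\le\beta_G^Q$, valid for every state and every choice of dichotomic observables, and the matching lower bound. The lower bound is immediate from the realization already written down in the main text: with $A_0^{(1)}=(\sigma_X+\sigma_Z)/\sqrt2$, $A_1^{(1)}=(\sigma_X-\sigma_Z)/\sqrt2$, $A_0^{(i)}=\sigma_X$, $A_1^{(i)}=\sigma_Z$ ($i\ge 2$) and the state $\ket{\psi_G}$, one has $A_0^{(1)}+A_1^{(1)}=\sqrt2\,\sigma_X$ and $A_0^{(1)}-A_1^{(1)}=\sqrt2\,\sigma_Z$, so each correlator coming from a generator that acts nontrivially at site $1$ equals $\sqrt2\,\bracket{\psi_G}{G_i}{\psi_G}=\sqrt2$ and every other one equals $\bracket{\psi_G}{G_i}{\psi_G}=1$; the weighted sum is $2\sqrt2\,n_{\max}+(N-n_{\max}-1)=\beta_G^Q$. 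So the real work is the upper bound, which I would obtain from an explicit sum-of-squares decomposition of the shifted Bell operator.

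Write $\mathcal B_G=n_{\max}\,\mathcal G_1+\sum_{i\in n(1)}\mathcal G_i+\sum_{i\notin n(1)\cup\{1\}}\mathcal G_i$, where $\mathcal G_i$ is the operator obtained from the stabilizer $G_i$ by the prescribed substitutions. Put $a=A_0^{(1)}+A_1^{(1)}$ and $a'=A_0^{(1)}-A_1^{(1)}$; then $\mathcal G_1=a\,\mathcal Z$ with $\mathcal Z=\bigotimes_{j\in n(1)}A_1^{(j)}$, $\mathcal G_i=a'\,\mathcal W_i$ with $\mathcal W_i=A_0^{(i)}\bigotimes_{j\in n(i)\setminus\{1\}}A_1^{(j)}$ for $i\in n(1)$, and $\mathcal G_i=\mathcal V_i$ with $\mathcal V_i=A_0^{(i)}\bigotimes_{j\in n(i)}A_1^{(j)}$ for the remaining $i$. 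The three ingredients I would use are: $(A_x^{(1)})^2=\mathbbm1$, so $a^2+a'^2=4\,\mathbbm1$; $\mathcal Z,\mathcal W_i,\mathcal V_i$ are tensor products of involutions on pairwise distinct sites, hence Hermitian with $\mathcal Z^2=\mathcal W_i^2=\mathcal V_i^2=\mathbbm1$; and all three act as the identity on site $1$, hence commute with $a$ and $a'$.

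Since $|n(1)|=n_{\max}$, I would split $n_{\max}\,\mathcal G_1=\sum_{i\in n(1)}\mathcal G_1$ and regroup $\mathcal B_G=\sum_{i\in n(1)}(\mathcal G_1+\mathcal G_i)+\sum_{i\notin n(1)\cup\{1\}}\mathcal V_i$. Each block $\mathcal G_1+\mathcal G_i=a\,\mathcal Z+a'\,\mathcal W_i$ has precisely the algebraic form of a CHSH operator, with party $1$ supplying $A_0^{(1)},A_1^{(1)}$ and a ``virtual'' party supplying the dichotomic observables $\mathcal Z,\mathcal W_i$, so the CHSH sum-of-squares identity $2\sqrt2\,\mathbbm1-(a\,\mathcal Z+a'\,\mathcal W_i)=\tfrac1{2\sqrt2}\bigl[(a-\sqrt2\,\mathcal Z)^2+(a'-\sqrt2\,\mathcal W_i)^2\bigr]\succeq0$ applies, which one checks using only the three ingredients above. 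Together with $\mathbbm1-\mathcal V_i=\tfrac12(\mathbbm1-\mathcal V_i)^2\succeq0$, summing the $n_{\max}$ CHSH squares and the $N-n_{\max}-1$ remaining terms gives a manifestly positive semidefinite expression for $\beta_G^Q\mathbbm1-\mathcal B_G$, valid for arbitrary observables, hence $\langle\mathcal B_G\rangle\le\beta_G^Q$.

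The step I expect to require the most care is the reduction of each paired block to CHSH: one must justify that the composite, multipartite operators $\mathcal Z$ and $\mathcal W_i$ genuinely function as single $\pm1$-valued observables (Hermiticity and squaring to $\mathbbm1$ hold because the Pauli-$X$/$Z$ replacements at distinct vertices live on distinct tensor factors) and that they commute with the first party's operators. It is worth stressing that $\mathcal Z$ and the various $\mathcal W_i$ need not commute with one another, but this is harmless: the decomposition is a sum of independent operator inequalities, one per neighbour of the first vertex, so no simultaneous diagonalization is invoked. The remaining ingredients --- the weight-splitting $n_{\max}=|n(1)|$, the trivial bound $\|\mathcal V_i\|\le1$, and counting that exactly $N-n_{\max}-1$ correlators omit party $1$ --- are routine bookkeeping.
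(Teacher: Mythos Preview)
Your proof is correct and, at its core, coincides with the paper's. The paper establishes the upper bound by the sum-of-squares decomposition
\[
\beta_G^Q\mathbbm{1}-\mathcal B_G=\frac{n_{\max}}{\sqrt2}(\mathbbm{1}-P_1)^2+\frac{1}{\sqrt2}\sum_{i\in n(1)}(\mathbbm{1}-P_i)^2+\frac12\sum_{i\notin n(1)\cup\{1\}}(\mathbbm{1}-P_i)^2,
\]
with $P_1=\tfrac{a}{\sqrt2}\,\mathcal Z$, $P_i=\tfrac{a'}{\sqrt2}\,\mathcal W_i$ for $i\in n(1)$, and $P_i=\mathcal V_i$ otherwise; since $a,\mathcal Z$ commute and $\mathcal Z^2=\mathbbm1$ one has $(a-\sqrt2\,\mathcal Z)^2=2(\mathbbm1-P_1)^2$ and likewise $(a'-\sqrt2\,\mathcal W_i)^2=2(\mathbbm1-P_i)^2$, so your CHSH-block decomposition is literally the same SOS rewritten term by term. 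The lower bound via the explicit realization is also identical. The only difference is presentational: you derive the squares by invoking the CHSH SOS identity for each neighbour of vertex~$1$, whereas the paper simply states the $P_i$ and leaves the verification to the reader; your framing makes the origin of the decomposition more transparent and matches the remark in the main text that $I_G$ is a sum of $n_{\max}$ CHSH expressions.
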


\begin{proof}Let us consider dichotomic observables 
$A_{x_i}^{(i)}$ for each observer and construct from them the 
Bell operator corresponding to the Bell expression $I_G$, 
%
%
%
%
\begin{eqnarray}
\mathcal{B}_G:&=&n_{\max}(A_0^{(1)}+A_1^{(1)})\otimes \bigotimes_{i\in N(1)}A^{(i)}_{1}\nonumber\\
&&+\sum_{i\in N(1)}
(A_0^{(1)}-A_1^{(1)})\otimes A_0^{(i)}\otimes\bigotimes_{j\in N(i)\setminus\{1\}} A^{(j)}_1\nonumber\\
&&+\sum_{i\notin N(1)\cup\{1\}}
 A^{(i)}_0\otimes\bigotimes_{j\in N(i)} A^{(j)}_{1},
\end{eqnarray}
By direct checking it is not difficult to see
that the shifted Bell operator
$\beta_G^Q\mathbbm{1}-\mathcal{B}_G$ 
can be decomposed into the following sum of squares 
\begin{eqnarray}\label{app:SOS}
\beta_G^Q\mathbbm{1}-\mathcal{B}_G
&=&\frac{n_{\max}}{\sqrt{2}}
\left(\mathbbm{1}-P_1\right)^2
+\frac{1}{\sqrt{2}}\sum_{i\in N(1)}\left(\mathbbm{1}-P_i\right)^2\nonumber\\
&&+\frac{1}{2}\sum_{i\notin N(1)\cup\{1\}}
\left(\mathbbm{1}-P_{i}\right)^2,
\end{eqnarray}
where $P_i$ are operators defined as
\begin{equation}
P_1=\frac{A_0^{(1)}+A_1^{(1)}}{\sqrt{2}}\otimes \bigotimes_{i\in N(1)} A_{1}^{(i)},
%
\end{equation}
\begin{equation}
P_i=\frac{A_0^{(1)}-A_1^{(1)}}{\sqrt{2}}\otimes
A_0^{(i)}\otimes\bigotimes_{j\in N(i)\setminus\{1\}} A^{(j)}_1
\end{equation}
for $i\in N(1)$, and, finally,
\begin{equation}
P_i =A^{(i)}_0\otimes\bigotimes_{j\in N(i)} A^{(j)}_{1}
\end{equation}
for $i\notin N(1)\cup\{1\}$. This immediately implies that $\beta_G^Q\mathbbm{1}-\mathcal{B}_G \succeq 0$, and since the decomposition (\ref{app:SOS}) holds true for any 
choice of local observables $A_{x_i}^{(i)}$, we have that 
$\beta_G^Q$ upper bounds the maximal quantum value of $I_G$, that is,
\begin{equation}\label{app:upper}
\max_{Q_N}I_G\leq \beta_G^{Q}.
\end{equation}

To prove that (\ref{app:upper}) turns into an equality, 
let us consider the following observables
\begin{equation}
A_0^{(1)}=\frac{1}{\sqrt{2}}(\sigma_X+\sigma_Z),\qquad 
A_1^{(1)}=\frac{1}{\sqrt{2}}(\sigma_X-\sigma_Z)
\end{equation}
for the first observer and $A_0^{(i)}=\sigma_X$ 
and $A_1^{(i)}=\sigma_Z$ for $i=2,\ldots,N$. By a direct check one sees 
that for these observables and the graph state 
$\ket{\psi_G}$ the value of $I_G$ is exactly 
$\beta_G^Q$, which completes the proof.
\end{proof}

\section{Appendix B: Self-testing graph states}\label{app:selftesting}

In this section we provide the proof of 
Fact \ref{Fact3}, which for completeness we state formally here. 
\begin{fakt}\label{app:Fact3}
Given a graph $G$, if the corresponding Bell inequality (\ref{eq:inequalities})
is violated maximally by a state $|\psi\rangle$ and observables $\bar{A}_j^{(i)}$, then the following holds true:
\begin{equation}
\Phi[(\bar{A}^{(i_1)}_{k_{i_1}}\otimes\ldots\otimes \bar{A}^{(i_1)}_{k_{i_1}})\ket{\psi}]=
(A^{(i_1)}_{k_{i_1}}\otimes\ldots\otimes A^{(i_1)}_{k_{i_1}})\ket{\psi_G}\otimes |\mathrm{aux}\rangle, 
\end{equation}
where $\Phi=\Phi_1\otimes \ldots\otimes\Phi_N$ with $\Phi_i$ being the local isometry defined in Fig. \ref{fig:swapp}, $\ket{\mathrm{aux}}$ is some state encoding uncorrelated degrees of freedom, 
\begin{equation}
A_j^{(1)}=\frac{1}{\sqrt{2}}[\sigma_X+(-1)^{j}\sigma_Z]
\end{equation}
and
\begin{equation}
A_0^{(i)}=\sigma_X, \quad A_1^{(i)}=\sigma_Z \quad (i=2,\ldots,N).
\end{equation}
\end{fakt}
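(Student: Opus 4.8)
The plan is to run the standard three-step self-testing argument, using the sum-of-squares (SOS) decomposition of Appendix~A as the starting point and exploiting that $I_G$ is essentially a bundle of CHSH expressions anchored at vertex~$1$ together with some stabilizer terms.

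First I would turn the maximal violation into operator equations on the physical state. Since $|\psi\rangle$ saturates $I_G=\beta_G^Q$, the shifted Bell operator $\beta_G^Q\mathbbm 1-\mathcal B_G$ annihilates $|\psi\rangle$; because the SOS decomposition of Appendix~A expresses it as a sum of terms $\lambda_i(\mathbbm 1-\bar P_i)^2$ with $\lambda_i>0$, every summand must vanish on $|\psi\rangle$, i.e. $\bar P_i|\psi\rangle=|\psi\rangle$ for each $i=1,\dots,N$, where $\bar P_i$ denotes the operator $P_i$ of Appendix~A built from the device observables $\bar A^{(i)}_{x_i}$. These relations are just the stabilizer equations $G_i|\psi_G\rangle=|\psi_G\rangle$ transported to the physical system under the replacements $\sigma_X^{(1)}\mapsto(\bar A_0^{(1)}+\bar A_1^{(1)})/\sqrt2$, $\sigma_Z^{(1)}\mapsto(\bar A_0^{(1)}-\bar A_1^{(1)})/\sqrt2$, $\sigma_X^{(i)}\mapsto\bar A_0^{(i)}$, $\sigma_Z^{(i)}\mapsto\bar A_1^{(i)}$ for $i\ge2$.

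Second I would extract the local algebra that will feed the isometry. Set $\bar X_1=(\bar A_0^{(1)}+\bar A_1^{(1)})/\sqrt2$, $\bar Z_1=(\bar A_0^{(1)}-\bar A_1^{(1)})/\sqrt2$, and $\bar X_i=\bar A_0^{(i)}$, $\bar Z_i=\bar A_1^{(i)}$ for $i\ge2$. Three facts are needed: (i) $\{\bar X_1,\bar Z_1\}=0$ holds identically, since $\{A+B,A-B\}=2A^2-2B^2=0$ for any $\pm1$-valued $A,B$; (ii) squaring $\bar P_1$ (resp. $\bar P_i$ with $i\in n(1)$) and using $\bar P_i^2|\psi\rangle=|\psi\rangle$ yields $\bar X_1^2=\bar Z_1^2=\mathbbm 1$ on $\mathrm{supp}(\rho_1)$, so there $\bar X_1,\bar Z_1$ are anticommuting involutions and $\bar X_1\bar Z_1$ is invertible; (iii) one needs $\{\bar X_i,\bar Z_i\}|\psi\rangle=0$ at every site $i\ge2$. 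For $i\in n(1)$ this is a CHSH-rigidity computation: $\bar P_1$ and $\bar P_i$ fail to commute only at site~$1$ (exact anticommutation) and at site~$i$, so $(\bar P_1\bar P_i-\bar P_i\bar P_1)|\psi\rangle=0$ reduces to $\bar X_1\bar Z_1\otimes\{\bar A_0^{(i)},\bar A_1^{(i)}\}\otimes E\,|\psi\rangle=0$ with $E$ unitary on the remaining sites, and invertibility of $\bar X_1\bar Z_1$ on $\mathrm{supp}(\rho_1)$ forces $\{\bar A_0^{(i)},\bar A_1^{(i)}\}|\psi\rangle=0$. For a vertex $i\notin n(1)\cup\{1\}$ the analogous bracket of $\bar P_i$ with some $\bar P_k$, $k\in n(i)$, contains two ``flips'' (at sites $i$ and $k$), so I would instead propagate outward: order the vertices by their distance from $1$ and, going in increasing distance, combine $(\bar P_i\bar P_k-\bar P_k\bar P_i)|\psi\rangle=0$ with the anticommutation already established at $k$ and with $\bar P_k|\psi\rangle=|\psi\rangle$ (which lets one replace $\bar A_0^{(k)}$ acting on $|\psi\rangle$ by the complementary product of neighbouring $\bar A_1$'s) in order to isolate $\{\bar A_0^{(i)},\bar A_1^{(i)}\}|\psi\rangle=0$. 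I expect this propagation to be the main obstacle: it is where the connectivity of $G$ genuinely enters, and the bookkeeping — tracking which operators overlap along a path, and ensuring one never appeals to an anticommutation relation off the state — is the only delicate part of the argument.

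Third I would apply the isometry and identify everything. With $\bar X_i^2=\bar Z_i^2=\mathbbm 1$ on the relevant supports and $\{\bar X_i,\bar Z_i\}|\psi\rangle=0$ for all $i$, feed $\bar X_i,\bar Z_i$ into the local SWAP circuit $\Phi_i$ of Fig.~\ref{fig:swapp} and set $\Phi=\Phi_1\otimes\cdots\otimes\Phi_N$. A ``how-to-swap'' computation — using only the algebra of the $\bar X_i,\bar Z_i$ on $|\psi\rangle$ together with the relations $\bar P_i|\psi\rangle=|\psi\rangle$ of the first step — shows that $\Phi$ maps $(\bigotimes_i\bar O_i)|\psi\rangle$ to $(\bigotimes_i O_i)|\chi\rangle\otimes|\mathrm{aux}\rangle$, where $O_i$ is the reference Pauli-type operator matching $\bar O_i$ and $|\chi\rangle$ satisfies $G_i|\chi\rangle=|\chi\rangle$ for every $i$. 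Since $|\psi_G\rangle$ is by definition the unique common $+1$ eigenvector of the $G_i$, this forces $|\chi\rangle=|\psi_G\rangle$. Finally, inverting $\bar A_0^{(1)}=(\bar X_1+\bar Z_1)/\sqrt2$, $\bar A_1^{(1)}=(\bar X_1-\bar Z_1)/\sqrt2$ and $\bar A_0^{(i)}=\bar X_i$, $\bar A_1^{(i)}=\bar Z_i$, with $\bar X_i\leftrightarrow\sigma_X$ and $\bar Z_i\leftrightarrow\sigma_Z$, reproduces the claimed forms $A_j^{(1)}=[\sigma_X+(-1)^j\sigma_Z]/\sqrt2$ and $A_0^{(i)}=\sigma_X$, $A_1^{(i)}=\sigma_Z$ for $i\ge2$.
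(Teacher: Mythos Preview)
Your proposal is correct and follows essentially the same three-step route as the paper's proof: extract the stabilizer-type relations $\bar P_i|\psi\rangle=|\psi\rangle$ from the SOS, propagate the anticommutation $\{\bar X_i,\bar Z_i\}|\psi\rangle=0$ outward through the graph from vertex~$1$ (the paper does this by direct substitution of the relations rather than your commutator trick, but the content is the same), and then run the SWAP isometry using those relations to recover $|\psi_G\rangle$. The one technical point you underplay and the paper makes explicit is that your $\bar X_1,\bar Z_1$ are only involutions on $\mathrm{supp}(\rho_1)$, not unitary operators, so before feeding them into the controlled gates of $\Phi_1$ one regularizes via the polar decomposition $X_1=\bar X_1/|\bar X_1|$, $Z_1=\bar Z_1/|\bar Z_1|$ and then checks (as in Ref.~\cite{bamps2015sum}) that $X_1|\psi\rangle=\bar X_1|\psi\rangle$, $Z_1|\psi\rangle=\bar Z_1|\psi\rangle$, so that all the state-level identities you derived carry over unchanged.
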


Before proving this fact, we need some
preparation. Let us consider a graph $G$ and the corresponding graph state $|\psi_G\rangle$. Let us also assume that the Bell inequality (\ref{eq:inequalities}) associated to this graph is maximally violated by a state $\ket{\psi}$ and 
observables $\bar{A}_j^{(i)}$. We then consider the following operators
\begin{equation}\label{X1Z1}
X'_1 = \frac{1}{\sqrt{2}}\left(\bar{A}_0^{(1)}+\bar{A}_1^{(1)}\right), \quad  Z'_1 = \frac{1}{\sqrt{2}}\left(\bar{A}_0^{(1)}-\bar{A}_1^{(1)}\right),
\end{equation}
and $X_1=X_1'/|X_1'|$ and $Z_1=Z_1'/|Z_1'|$. We also denote 
$X_i=A_0^{(i)}$ and $Z_i=A_1^{(i)}$ for $i=2,\ldots,N$. 
It is not difficult
to check that all these operators $X_i$ and $Z_i$
with $i=1,\ldots,N$ are unitary; for $i=2,\ldots,N$
this follows from the fact that $A_j^{(i)}$ are Hermitian 
and have eigenvalues $\pm1$, whereas for $i=1$ it stems from the polar decomposition 
(see, e.g., Ref. \cite{supic2016self}).
Let us finally choose as isometry the so-called SWAP isometry, whose output reads as follows
\begin{equation}\label{iso}
\Phi\left(\ket{+}^{\otimes N}\otimes \ket{\psi}\right)     
=\sum_{\tau \in \{0,1\}^N}\ket{\tau}\otimes \left( \bigotimes_{j=1}^NX_j^{\tau_j}Z_j^{(\tau_j)} \right) \ket{\psi},
\end{equation}
where $X_i$ and $Z_i$ are those defined above and we have also defined $Z_i^{(\tau_j)} = [\mathbbm{1}+(-1)^{\tau_j}Z_j]/2$, while the summation is over all $N$-element sequences $(\tau_1,\ldots,\tau_N)$ with each $\tau_i\in\{0,1\}$.
Notice that the action of this isometry is to perform a unitary operation $\Phi=\Phi_1\otimes\ldots\otimes\Phi_N$
on the state $\ket{+}^{\otimes N}\otimes \ket{\psi}$, where each unitary $\Phi_i$ acts on the $i$-th particle of $\ket{\psi}$ and one of the qubits in the state $\ket{+}$. A visual representation of a local branch of the isometry $\Phi_i$ is shown in Fig. \ref{fig:swapp}. We are now ready to prove Fact \ref{Fact3}. 
\begin{figure}[h!]
\centering
\includegraphics[width=0.4\textwidth]{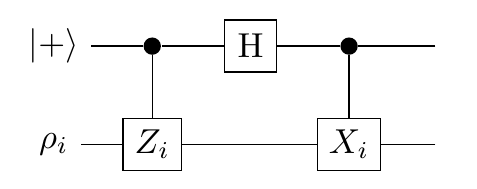}
\caption{A local branch $\Phi_i$ of the SWAP gate $\Phi=\Phi_1\otimes\ldots\otimes\Phi_N$. The isometry can be viewed as a unitary with each branch acting on  the $i$-th particle of $\ket{\psi}$ and one ancillary qubit in the state $\ket{+}$.} \label{fig:swapp}
\end{figure}
\begin{proof}
For the sake of simplicity let us assume that $2 \in n(1)$ (notice that this can always be done by relabeling the vertices). Let us also notice that, as shown 
in Ref. \cite{bamps2015sum}, one has $X_1\ket{\psi}=X_1'\ket{\psi}$ and $Z_1\ket{\psi}=Z_1'\ket{\psi}$, and so in what follows we will denote the operators in Eq. \eqref{X1Z1} by $X_1$ and $Z_1$, respectively.  

The first step of our proof is to show that 
$X_1$ and $Z_1$ as well as $X_i$ and $Z_i$ with $i=2,\ldots,N$ anticommute when acting on $\ket{\psi}$, that is, 
%
%
%
\begin{equation}\label{anticomi}
(X_iZ_i+Z_iX_i)\ket{\psi}=0\qquad (i=1,\ldots,N).
\end{equation}
To prove that (\ref{anticomi}) holds true for $i=1$ it suffices to 
use the definitions (\ref{X1Z1}). Then, to prove (\ref{anticomi}) for the rest of vertices, let us first consider 
the case $i\in n(1)$. For these vertices, the  
sum of squares decomposition (\ref{app:SOS}), 
implies the following relations 
\begin{eqnarray}
X_1\ket{\psi}&=&\bigotimes_{i\in n(1)}Z_i\ket{\psi},\nonumber\\
Z_1\ket{\psi}&=&X_{2}\otimes \left( \bigotimes_{i\in n(m)\setminus\{1\}}Z_{i} \right)  \ket{\psi},
\end{eqnarray}
%
%
which can equivalently be stated as 
\begin{eqnarray}\label{10}
X_1\otimes \left( \bigotimes_{i \in n(1)\setminus\{m\}}Z_{i} \right) \ket{\psi} &=&  Z_m \ket{\psi},\\ \label{11}
Z_1\otimes \left( \bigotimes_{i \in n(m)\setminus\{1\}}Z_{i} \right) \ket{\psi} &=& X_{m}\ket{\psi},
\end{eqnarray}
where $m\in N(1)$.
By plugging Eqs. (\ref{10}) and (\ref{11}) into Eq. (\ref{anticomi}) we have
\begin{widetext}
\begin{eqnarray}\label{ac2}
(X_mZ_m+Z_mX_m)\ket{\psi}=\left[(Z_1X_1+X_1Z_1)\otimes\bigotimes_{i\in n(1,m)}Z_i\right]\ket{\psi}=0
%
%
\end{eqnarray}
\end{widetext}
where $n(1,m)$ stands for the neighbours of 
the first and the $m$th vertex (excluding these two vertices). 
Due to the fact that, as proven before, $X_1$ and $Z_1$
anticommute, the right-hand side of the above relation
vanishes which gives us (\ref{anticomi}) for all $i\in n(1)$.

Let us then prove the anticommutation relation (\ref{anticomi}) for all vertices that are not in $n(1)$ but are neighbours of those belonging to $n(1)$.
Consider a vertex $j\notin n(1)$, which is a neighbour of a vertex $k\in n(1)$. For it the decomposition (\ref{app:SOS}) implies the following relations
\begin{equation}\label{stab1}
X_j\ket{\psi}=\left[Z_k\otimes\bigotimes_{i\in n(j)\setminus\{k\}}Z_i\right]\ket{\psi}
\end{equation}
and
\begin{equation}
Z_j\ket{\psi}=\left[X_k\otimes\bigotimes_{i\in n(k)\setminus\{j\}}Z_i\right]\ket{\psi},
\end{equation}
from which one obtains
\begin{eqnarray}
(X_jZ_j+Z_jX_j)\ket{\psi}&=&\left[(Z_kX_k+X_kZ_k)\otimes\bigotimes_{i\in n(j,k)}Z_i\right]\ket{\psi}\nonumber\\
&=&0,
\end{eqnarray}
where the last equality stems from the anticommutation
relation for $X_k$ and $Z_k$. 

Noting that there are no isolated vertices in the graph, 
we can repeat the above procedure until (\ref{anticomi}) is proven for all vertices.

Having the anticommutation relations (\ref{anticomi}) 
for all vertices of the graph, the remainder of the proof
is exactly the same as that of Theorem 4 in Ref. \cite{supic2017simple}
(see Appendix F therein). However, for completeness we present it here.

Let us go back to the action of the unitary operation $\Phi=\Phi_1\otimes\ldots\otimes\Phi_N$
on the state $\ket{+}^{\otimes N}\otimes \ket{\psi}$. Let us consider a particular term from output state given in \eqref{iso}, corresponding to the sequence $\tau$ which has $k > 0$ ones at the positions $j_1,\dots, j_k$:
\begin{equation}\label{tau}
\ket{\tau}\otimes\left(\bigotimes_{j \notin J(\tau)} Z_j^{(0)} \right)  \otimes \left( \bigotimes_{j \in J(\tau)} X_jZ_j^{(1)}\right) \ket{\psi},
\end{equation}
where $J(\tau)=\{j_1,\ldots,j_k\}$. Also, for $\tau$ let us denote by $n(\tau)$ the number of edges connecting vertices denoted by labels $j \in J(\tau)$ (without counting the same edge twice). Consider then one of the vertices from $J(\tau)$, denoted by $j_1$, and let the number of its neighbors in $J(\tau)$ be $\bar{n}(j_1)$, i.e., $\bar{n}(j_1)=|J(\tau)\cap n(j_1)|$. Due to the anticommutation relation $\{X_{j_1},Z_{j_1}\}\ket{\psi} = 0$, the expression \eqref{tau} can be rewritten as
\begin{widetext}
\begin{eqnarray}
&&\hspace{-1cm}\ket{\tau}\otimes\left( \bigotimes_{j \notin J(\tau)} Z_j^{(0)} \right) \otimes \left( \bigotimes_{j \in J(\tau)\setminus\{j_1\}} X_jZ_j^{(1)} \right) \otimes X_{j_1}Z_{j_1}^{(1)} \ket{\psi}\nonumber\\
&&=(-1)^{\bar{n}(j_1)}\ket{\tau}\otimes\left(\bigotimes_{j \notin J(\tau)} Z_j^{(0)} \right) \otimes \left( \bigotimes_{j \in J(\tau)\setminus\{j_1\}} X_jZ_j^{(1)} \right) \otimes Z_{j_1}^{(0)} \ket{\psi},
%
\end{eqnarray}
\end{widetext}
where we have also used the following relation
\begin{equation}
X_i\ket{\psi}=\bigotimes_{j\in N(i)}Z_j\ket{\psi},
\end{equation}
that stems from the sum of squares decomposition (\ref{app:SOS}) and the fact that $Z_j^{(1)} Z_j = - Z_j^{(1)}$. By using the anticommutation relations (\ref{anticomi}) as well as 
the relations (\ref{stab1}), in a similar way we can get rid of all the operators $X_j$
appearing in (\ref{tau}). This allows us to rewrite (\ref{tau}) as
\begin{equation}\label{tau1}
(-1)^{n(\tau)}\ket{\tau}\otimes \left( \bigotimes_{j=1}^N Z_j^{(0)} \right)  \ket{\psi}.
\end{equation}
After plugging the above into Eq. \eqref{iso}, one obtains
\begin{align}\label{iso1}\begin{split}
& \Phi\left(\ket{+}^{\otimes N}\otimes \ket{\psi}\right) =  \\   & =\sum_{\tau \in \{0,1\}^N}(-1)^{n(\tau)}\ket{\tau}\otimes\left( \bigotimes_{j=1}^N Z_j^{(0)} \right) \ket{\psi} \\ 
&= \ket{\psi_G}\otimes\ket{\textrm{aux}},
\end{split}
\end{align}
where we used the expression for a graph state in the computational basis
\begin{equation*}
\ket{\psi_G} = \frac{1}{\sqrt{2}^N}\sum_{\tau \in \{0,1\}^N}(-1)^{n(\tau)}\ket{\tau}.
\end{equation*}
This completes the proof. The proof for self-testing of measurements goes along the same lines as the one for the state (see for example Appendix E of \cite{supic2017simple}).

\end{proof}

\subsection{Appendix C: Fidelity bounds}

Here we adopt the techniques introduced in \cite{kaniewski2016analytic} to inequality \eqref{eq:GHZineq} to derive fidelity bounds that depend on the quantum violation observed. 

Let us recall the main ingredients of the method
from \cite{kaniewski2016analytic}, adapting them 
to our purposes.  
To this end, we consider a Bell inequality (\ref{eq:inequalities}) corresponding to a graph $G$ and a multipartite state $\rho_N$ of unspecified local dimension reaching the violation $\beta$
of it. Define the extractability
of $\ket{\psi_G}$ from $\rho_N$ as
\begin{equation}\label{extract}
\Theta (\rho_N \rightarrow \psi_G ) = \max_{\Lambda_{1} , \ldots, \Lambda_{N}} \langle \psi_G|(\Lambda_1 \otimes \ldots \otimes\Lambda_N) (\rho_N)\ket{\psi_G}, 
\end{equation}
where $F(\rho,\sigma) = \|\sqrt{\rho}\sqrt{\sigma} \|^2_1 $ is the state fidelity, $\ket{\psi_G}$ is the graph state corresponding to $G$, and the maximisation is taken over all local quantum channels $\Lambda_i$. Notice that the formulation in terms of quantum channels is equivalent to 
first adding local ancillas in any states and then performing local unitaries which extract the desired 
state $\ket{\psi_G}$ into these registers. Thus, the aim is to find the isometry under which the state $\rho_N$ is closest to the desired state $\ket{\psi_G}$. 

The aim is to put a lower bound on $\Theta (\rho_N \rightarrow \psi_G )$ in terms of the violation $\beta$. For this purpose, we notice that the fidelity in Eq. (\ref{extract})
can equivalently be written as 
\begin{equation}
\Tr[\rho_N(\Lambda_1^{\dagger}\otimes\ldots\otimes\Lambda_N^{\dagger})(|\psi_G\rangle\!\langle\psi_G|)],
\end{equation} 
where $\Lambda_i^{\dagger}$ are dual maps of the quantum channels $\Lambda_i$. Now, proving for some particular channels $\Lambda_i$  
an operator inequality 
\begin{equation}\label{OpIneq}
K:=(\Lambda_1^{\dagger}\otimes\ldots\otimes\Lambda_N^{\dagger})(|\psi_G\rangle\!\langle\psi_G|)\geq s\mathcal{B}_G+\mu\mathbbm{1}
\end{equation}
with for some $s,\mu\in\mathbbm{R}$, where $\mathcal{B}_G$
stands for the Bell operator corresponding to 
the Bell inequality (\ref{eq:inequalities}) and constructed 
from any possible dichotomic observables would imply the following inequality for the extractability
\begin{equation}
\Theta(\rho_N\to\psi_G)\geq s \beta+\mu.
\end{equation}

Proving an operator inequality (\ref{OpIneq})
for arbitrary local observables in $\mathcal{B}_G$ 
is certainly a formidable task. However, due to the fact that here we consider the simplest Bell scenario involving
two dichotomic measurements per site, one can exploit 
Jordan's lemma, which, as explained in Ref. \cite{kaniewski2016analytic} 
reduces the problem to basically  an $N$-qubit space. That is, 
the local observables $A_{x_i}^{(i)}$ can now be parametrized as
\begin{equation}\label{eq:meas}
A_{x_1}^{(1)} = \cos{\alpha_1}\, \sigma_X + 
(-1)^{x_1} \sin{\alpha_1}\, \sigma_Z, 
\end{equation}
and
\begin{equation}\label{eq:measi}
A_{x_i}^{(i)} = \cos{\alpha_i}\, \sigma_H + 
(-1)^{x_i} \sin{\alpha_i}\, \sigma_V  
\end{equation}
for $i = 2, \ldots, N$,
where $\sigma_H = (\sigma_X + \sigma_Z)/\sqrt{2}$, $\sigma_V = (\sigma_X - \sigma_Z)/\sqrt{2}$ and $\alpha_i \in [ 0,\pi/2] $. This gives rise to 
a Bell operator $\mathcal{B}_{G}(\vec{\alpha})$ that now depends on the angles $\alpha_i$. Let us then consider
particular quantum channels
\begin{equation}
\Lambda_i(x) [\rho] = \frac{1 + g(x)}{2} \rho + 
\frac{1 - g(x)}{2} \Gamma_i (x) \rho \Gamma_i (x)
\end{equation}
and the dependence on the measurement angle is encoded in the function $g(x) = (1+ \sqrt{2} )(\sin{x}+\cos{x} -1)$ together with the definition of $\Gamma(x) = M_i^{a}$ if $x \leq \pi /4$ and $\Gamma(x) = M_i^{b}$ if $x > \pi /4$. Lastly, we define $M_1^{a,b} = \sigma_X,
\sigma_Z$ and  $M_i^{a,b} = \sigma_H,\sigma_V$ for $i = 2, \ldots N$.

We now want to prove that for all possible
choices of $\alpha_i$, the following inequality is
satisfied
\begin{equation}\label{eq:fidbound}
K(\alpha_1,\ldots,\alpha_N)\geq s\mathcal{B}_G(\alpha_1,\ldots,\alpha_N)+\mu\mathbbm{1}
\end{equation}
for some choice of $s,\mu\in\mathbbm{R}$, which, as exampled earlier, would imply 
an inequality for the extractability. 

We have performed numerical tests to derive bounds of the kind \eqref{eq:fidbound} for the inequality for the GHZ state and the ring cluster state for values of $N \leq 7$.
The applied procedure works as follows: given a fixed $s$, estimate the corresponding $\mu$ by numerically computing the minimal eigenvalue of the operator $K + s \mathcal{B}_G$ and minimizing over all the angles $\alpha_i$.
Notice that to have a fidelity bound that leads to fidelity $1$ at the point of maximal violation, the inequality \eqref{eq:fidbound} has to become tight for the measurements angles leading to the maximal violation, that is $\alpha_i = \pi/4$ for all $i = 1, \ldots, N$.
As a second step, we therefore estimated numerically the minimum value of $s$ for which the corresponding bound still satisfied such a property. This led to linear bounds with the optimal slope.

\section{Appendix D: Increasing the quantum violation}\label{app:rotation}

Here we explain in more detail how our Bell inequalities can be modified
to allow for higher ratios $\beta_G^Q/\beta_G^C$. 

Given a Bell inequality (\ref{eq:inequalities}) corresponding to a graph $G$, consider
a vertex $j\in V$ that neither belongs to $n(1)$ nor it shares a neighbour 
with the first vertex. Then one can apply a second substitution ${\sigma_X}_j\to A_0^{(j)} + A^{(j)}_1$ and ${\sigma_Z}_j \rightarrow  A_0^{(j)} - A^{(j)}_1$ at that vertex. This gives us the following Bell inequality
\begin{widetext}
\begin{eqnarray}\label{eq:inequalities_2}
I_G:&=&n_{\max}\left\langle (A_0^{(1)}+A_1^{(1)})\prod_{i\in n(1)}A_{1}^{(i)}\right\rangle+\sum_{i\in n(1)}
\left\langle (A_0^{(1)}-A_1^{(1)})A_0^{(i)}\prod_{j\in n(i)\setminus\{1\}}A_1^{(j)}\right\rangle\nonumber\\
&&+n_{j}\left\langle (A_0^{(j)}+A_1^{(j)})\prod_{i\in n(j)}A_{1}^{(j)}\right\rangle+
\sum_{i\in n(j)}
\left\langle (A_0^{(j)}-A_1^{(j)})A_0^{(i)}\prod_{k\in n(i)\setminus\{j\}}A_1^{(k)}\right\rangle\nonumber\\
&&+\sum_{i\notin n(1)\cup\{1\}\cup n(j)\cup\{j\}}
\left\langle A^{(i)}_0 \prod_{k\in n(i)} A^{(j)}_{1}\right\rangle\leq \beta_G^C,
\end{eqnarray}
\end{widetext}
for which, as before, it is not difficult to analytically compute its maximal quantum and classical values. They read 
\begin{equation}
\beta^{(2)}_{G,Q} = N + n_{max} + n(j) -2
\end{equation} 
and
\begin{equation}
\beta^{(2)}_{G,C} = (2\sqrt{2} -1) [n_{max} + n(j)] + N -2,
\end{equation}
respectively, where the superscript indicates the fact we have played our trick with two sites. It then follows that $\beta^{(2)}_{G,Q}/\beta^{(2)}_{G,C} \geq \beta^{Q}_{G}/\beta^{C}_{G}$ for any graph $G$. 

We can repeat the same procedure with any other vertex which does not belong to $n(1)$ nor $n(j)$ and does not share any neighbour with neither vertex $1$ nor $j$. In such a way we can increase the ratio again. 
This clearly comes at the cost of increasing the number of expectation values appearing
in the Bell expression. Notice, however, that their linear scaling with $N$ is preserved even in the case in which the above mentioned substitution is applied to any available vertex.
Indeed, consider the extremal case in which the replacement with $A_0^{(i)} \pm A_1^{(i)}$ appears for some party $i$ in all the terms in the sum in \eqref{eq:inequalities}. Since such terms correspond to the $N$ generators $G_i$ of the stabilizer group, the resulting amount of correlators is exactly $2N$.
One can use a similar argument to see that the ratio $\beta^{Q}_{G}/\beta^{C}_{G}$ is always bounded and cannot exceed $\sqrt{2}$.

\begin{figure}[h!]
\centering
\includegraphics[scale = 0.4]{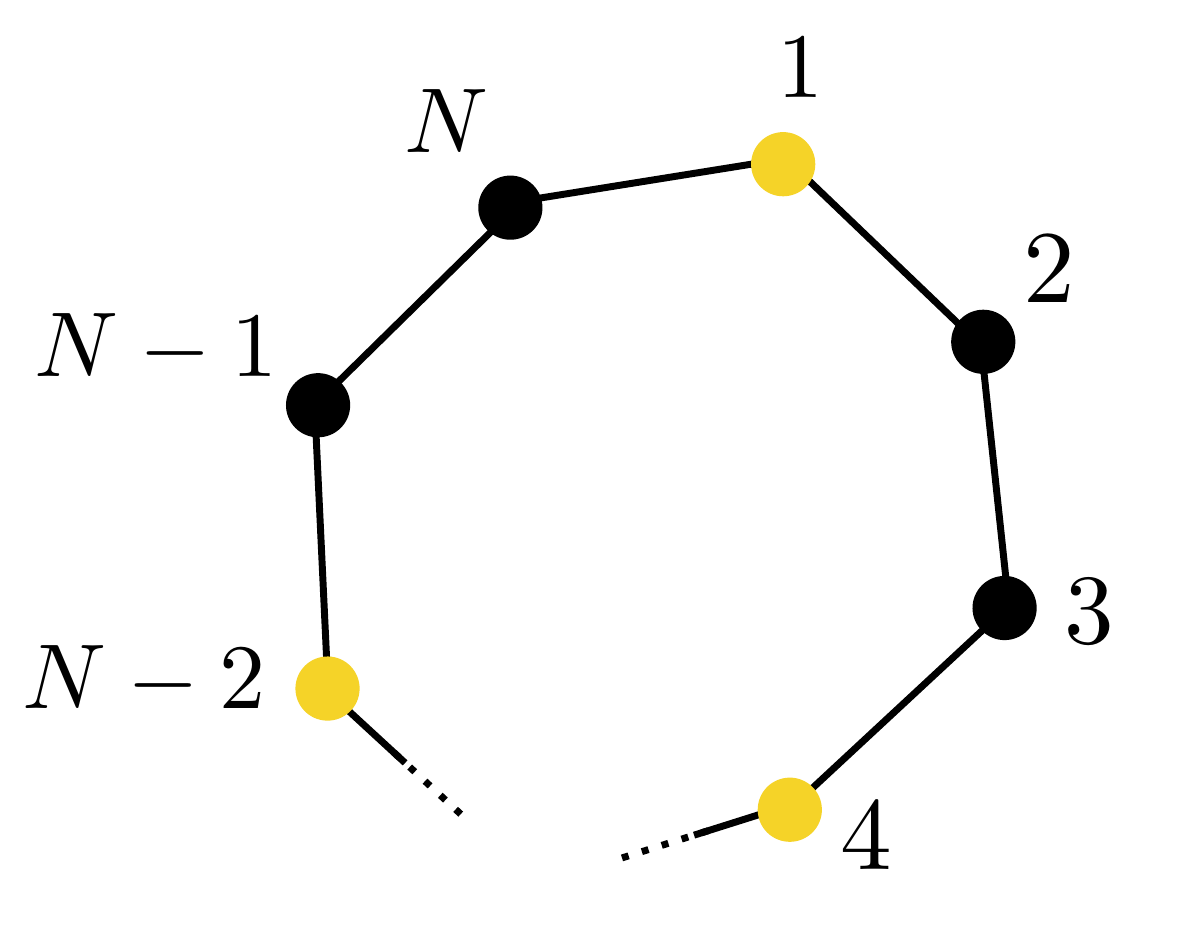}
\caption{Pictorial representation of the method to generate Bell inequalities with higher quantum violation, taking the ring graph state. The vertices coloured in yellow are the ones for which the substitution ${\sigma_X}_j\to A_0^{(j)} + A^{(j)}_1$ and ${\sigma_Z}_j \rightarrow  A_0^{(j)} - A^{(j)}_1$ is applied.} 
\label{fig:exampleapp}
\end{figure}

As an illustrative example let us turn back to the inequality \eqref{eq:ring} for the ring graph state. To improve the quantum-classical ratio, our methods tells us to choose \added{a} vertex whose neighbourhood does not interesect with $n(1) \cup \lbrace 1 \rbrace $. As a candidate to apply the second substitution we take vertex $4$ so to obtain the following Bell inequality
\begin{widetext}
\begin{eqnarray}
I^{(2)}_{\mathrm{ring}}&:=&2\langle A_1^{(N)}(A_0^{(1)}+A_1^{(1)})A_1^{(2)}\rangle+
\langle (A_0^{(1)}-A_1^{(1)})A_0^{(2)}A_1^{(3)}\rangle
+\langle A_1^{(N-1)}A_0^{(N)}(A_0^{(1)}-A_1^{(1)})\rangle \nonumber \\  
&& + 2\langle A_1^{(3)}(A_0^{(4)}+A_1^{(4)})A_1^{(5)}\rangle + \langle (A_0^{(4)}-A_1^{(4)})A_0^{(5)}A_1^{(6)}\rangle
+\langle A_1^{(2)}A_0^{(3)}(A_0^{(4)}-A_1^{(4)})\rangle\nonumber\\
&&+\sum_{i=6}^{N-1}\langle A_1^{(i-1)}A_0^{(i)}A_1^{(i+1)} \rangle\leq N+2,
\end{eqnarray}
\end{widetext}
with a corresponding quantum violation of $\beta^{(2)}_Q = N - 6 + 8\sqrt{2}$.
Let us then notice that we can repeat our trick for all vertices $3i+1$ for $i=1,\ldots, \lfloor N/3 \rfloor$ as each pair  of them does not belong to each others' neighborhoods nor 
shares a common vertex (see \figref{fig:exampleapp} for a pictorial representation). Thus we can generate a series of $ \lfloor N/3 \rfloor$ Bell inequalities whose classical and quantum values can easily be computed and are given by
\begin{equation}
\beta^C_k=N+k,\qquad \beta^Q_k=N+(4\sqrt{2}-3)k
\end{equation}
with $k=1,\ldots, \lfloor N/3 \rfloor$. The 
sequence satisfies $\beta^Q_{k+1}/\beta^C_{k+1}>\beta^Q_k/\beta^C_k$ for any $k$ and 
the ratio attains its maximal value for $N = 3L$, $k=L$, amounting to exactly $\beta^Q_N/\beta^C_N=\sqrt{2}$. The resulting inequality read as follows
\begin{eqnarray}
I^{\mathrm{max}}_{\mathrm{ring}}&:=& \sum_{i = 1}^{L} 2\langle A_1^{(3i)}(A_0^{(3i +1)}+A_1^{(3i +1)})A_1^{(3i +2)}\rangle \nonumber \\ 
&& + \langle (A_0^{(3i+1)}-A_1^{(3i+1)})A_0^{(3i +2)}A_1^{(3i + 3 )}\rangle\nonumber\\
&& +\langle A_1^{(3i-1)}A_0^{(3i)}(A_0^{(3i+1)}-A_1^{(3i+1)}) \rangle .
\end{eqnarray}

\section{Appendix E: Self-testing the partially entangled GHZ state from its stabilizers}

Here we look at the method used to derive Bell inequalities for graph states as a more general strategy, so to apply it to other states as well. We will show how to do it for the partially entangled GHZ state of the following form
\begin{equation}\label{partiallyEntGHZ}
|\mathrm{GHZ}_N(\theta)\rangle=\cos\theta\ket{0}^{\otimes N}+\sin\theta\ket{1}^{\otimes N}.
\end{equation}
For this state it is possible to define $N$ independent stabilizing operators:
\begin{equation}\label{Stab1App}
S_1  = \sin{2\theta} {\sigma_X}_1 {\sigma_X}_2 \ldots {\sigma_X}_N +  \cos{2 \theta} {\sigma_Z}_1 
\end{equation}
for the first site, and
\begin{equation}
S_i =  {\sigma_Z}_1 {\sigma_Z}_i 
\end{equation}
for sites $i = 2,\ldots,N$. Indeed, one can verify that 
$S_i |\mathrm{GHZ}_N(\theta)\rangle =|\mathrm{GHZ}_N(\theta)\rangle$ for any $\theta \in [0,\pi/4]$ and $i=1,\ldots,N$.

We will start by showing how to generalize the self-testing method introduced in Appendix B 
for the graph states and then building on that we will derive Bell inequalities for 
(\ref{partiallyEntGHZ}).


\textit{Self-testing proof.} Let us begin by making the following substitutions 
\begin{equation}\label{eq:rotate}
X'_1 = \frac{A^{(1)}_0 + A^{(1)}_1}{2\sin{\mu}}, \qquad {Z'}_1 = \frac{A^{(1)}_0 - A^{(1)}_1}{2\cos{\mu}},
\end{equation}
with their regularized versions being $X_1 = X'_1/|X'_1|$, $Z_1 = Z'_1/|Z'_1|$  and $X_i = A^{(i)}_0$, $Z_i = A^{(i)}_1$ for $i = 2,\ldots,N$.
Notice that the operators for the first observer anticommute by construction, while all the remaining ones square to identity, that is, $X_i^2={Z}_i^{2}=\mathbbm{1}$. 

Suppose now that we are given a Bell expression $\mathcal{I}$ whose maximal quantum value $\beta_Q$ is achieved by a state $\ket{\psi}$. Let us assume, moreover, that the corresponding Bell operator $\mathcal{B}$ admits the following sum of squares
\begin{equation} \label{eq:stabSOS}
c (\beta_Q \mathbbm{1}  - \mathcal{B} ) = \sum_{i = 1}^N \alpha_i^2 ( \mathbbm{1} - \tilde{S}_i )^2, 
\end{equation}
where we identify with $\tilde{S}_i$ the stabilizer operators with the substituted operators 
${X}_i,{Z}_i$. Such a decomposition would imply that the state $\ket{\psi}$ satisfies the stabilizing conditions 
\begin{equation}\label{StabTilde}
\tilde{S}_i \ket{\psi } = \ket{\psi }
\end{equation}
with $i = 1,\ldots,N$.

We now proceed to show that, with any choice of operators of the kind of \eqref{eq:rotate}, the above two equations suffice to self-test the partially entangled GHZ state for any $\theta \in (0, \pi/4 ] $. 

First, let us see how the stabilizing conditions allow to prove that all the pairs ${X}_i,{Z}_i$ anticommute and square to identity when acting on the state. 

Let us begin with ${X}_1$ and ${Z}_1$. First, from the definitions (\ref{eq:rotate}) we directly see that 
\begin{equation}\label{antiTilde1}
\{{X}_1,{Z}_1\}=0.
\end{equation}
Then, from the conditions (\ref{StabTilde}) and the fact that ${Z}_i^2=\mathbbm{1}$ for any $i=2,\ldots,N$ we immediately obtain ${Z}_1\ket{\psi}={Z}_i\ket{\psi}$, which implies that 
\begin{equation}
{Z}_1^2\ket{\psi}={Z}_1{Z}_i\ket{\psi}=\tilde{S}_i\ket{\psi}=\ket{\psi},
\end{equation}
and, as a result, that $\tilde{S}_i^2\ket{\psi}=\ket{\psi}$. To finally prove 
that ${X}_i^2=\mathbbm{1}$, we rewrite
(\ref{eq:rotate}) as
\begin{equation}
{X}_1=\frac{1}{\sin 2\theta}(\tilde{S}_1-\cos 2\theta {Z}_1)\mathsf{X}_1,
\end{equation}
where $\mathsf{X}_1={X}_2\ldots {X}_N$. Due to the fact that $\mathsf{X}^2_1=\mathbbm{1}$, we then have
\begin{equation}\label{X12}
{X}_1^2=\frac{1}{\sin^2 2\theta}\left[\tilde{S}_1^2-\cos 2\theta \{\tilde{S}_1,{Z}_1\}+\cos^22\theta {Z}_1^2\right].
\end{equation}
From the very definition of $\tilde{S}_1$ we can rewrite the anticommutator
appearing in the above as
\begin{eqnarray}\label{Oshee}
\{\tilde{S}_1,{Z}_1\}&=&\sin2\theta\{{X}_1,{Z}_1\}\mathsf{X}_1+2\cos2\theta {Z}_1^2\nonumber\\
&=&2\cos2\theta {Z}_1^2,
\end{eqnarray}
where the second equality stems from the anticommutation relation (\ref{antiTilde1}).
The identity (\ref{Oshee}) allows us to simplify Eq. (\ref{X12}) as
\begin{equation}\label{X12_new}
{X}_1^2=\frac{1}{\sin^2 2\theta}\left(\tilde{S}_1^2-\cos^2 2\theta {Z}_1^2\right), 
\end{equation}
which, due to the fact that $\tilde{S}_1^2\ket{\psi}={Z}_1^2\ket{\psi}=\ket{\psi}$, directly implies that ${X}_1^2\ket{\psi}=\ket{\psi}$.

Let us now turn to the operators  ${X}_i$ and ${Z}_i$ for the remaining sites 
$i=2,\ldots,N$. We have already noticed that ${X}_i^2={Z}_i^2=\mathbbm{1}$, so in what follows we prove that they anticommute. With the aid of Eq. (\ref{Stab1App}) we
can express ${X}_i$ as
\begin{equation}
{X}_i=\frac{1}{\sin2\theta}\mathsf{X}_i\left(\tilde{S}_1-\cos2\theta {Z}_1\right),
\end{equation}
where $\mathsf{X}_i={X}_1\ldots {X}_{i-1}{X}_{i+1}\ldots {X}_N$. This, after some straightforward maneuvers, allows us to write
\begin{equation}
\{{X}_i,{Z}_i\}\ket{\psi}=\frac{1}{\sin2\theta}\mathsf{X}_i
\left[\{\tilde{S}_1,{Z}_1\}-2\cos2\theta {Z}_1^2\right]\ket{\psi}
\end{equation}
To see that the right-hand side of the above equation vanishes it suffices to 
use Eq. (\ref{Oshee}).
We have thus established that 
\begin{equation}
\{{X}_i,{Z}_i\}\ket{\psi}=0
\end{equation}
as well as ${X}_i^2\ket{\psi}={Z}_i^2\ket{\psi}=\ket{\psi}$ for all $i=1,\ldots,N$.
Let us now use them to prove our self-testing statement with the isometry $\Phi=\Phi_1\otimes\ldots\otimes \Phi_N$ with each $\Phi_i$ traditionally defined as
in Fig. \ref{fig:examples}. As above, each operator $\Phi_i$ acts on one of the particles of state $\ket{\psi}$ and a qubit state $\ket +$, giving
\begin{equation}\label{iso_new}
\Phi\left(\ket{+}^{\otimes N}\otimes \ket{\psi}\right)     
=\sum_{\tau \in \{0,1\}^N}\ket{\tau}\otimes \left( \bigotimes_{j=1}^N{X}_j^{\tau_j}{Z}_j^{(\tau_j)} \right)\ket{\psi},
\end{equation}

Let us first show that all terms in (\ref{iso_new}) except for 
$\tau=(0,\ldots,0)$ and $\tau=(1,\ldots,1)$ vanish. To this end, 
consider a sequence $\tau$ in which $\tau_{m}=0$ and $\tau_{n}=1$
for some $m\neq n$. For such a sequence we can 
rewrite the corresponding term in (\ref{iso_new}) as
%
\begin{eqnarray}\label{dupa}
\hspace{-3cm}&&\left ( \bigotimes_{j \neq m,n} {X}_j^{\tau_j}{Z}_j^{(\tau_j)} \right) \otimes {Z}_{m}^{(0)} {X}_{n} {Z}_{n}^{(1)} \ket{\psi} \nonumber \\ 
&&= \left ( \bigotimes_{j \neq j_1,j_2} {X}_j^{\tau_j}{Z}_j^{(\tau_j)} \right) \otimes  {X}_{n} {Z}_{1}^{(1)} {Z}_{1}^{(0)} \ket{\psi},
\end{eqnarray}
where we used the anticommutation relation for ${X}_n$ and ${Z}_n$
as well as the fact that ${Z}_i\ket{\psi}={Z}_1\ket{\psi}$ for $i=2,\ldots,N$.
Noticing then that ${Z}_{1}^{(1)} {Z}_{1}^{(0)}=0$ as both ${Z}_{1}^{(i)}$
are unnormalized projections onto orthogonal subspaces, we see that 
(\ref{dupa}) amounts to zero.

Hence, the expression \eqref{iso_new} reduces to the following two terms
\begin{widetext}
\begin{eqnarray}
\Phi\left(\ket{+}^{\otimes N}\otimes \ket{\psi}\right) & =& \ket{0}^{\otimes N}\otimes \left({Z}_1^{(0)} \ldots  {Z}_N^{(0)}\right) \ket{\psi} +
\ket{1}^{\otimes N}\otimes \left({X}_1 {Z}_1^- \ldots  {X}_N {Z}_N^-\right) \ket{\psi} \nonumber \\
& =& \ket{0}^{\otimes N}\otimes ({Z}_1^{(0)})^N  \ket{\psi} + \ket{1}^{\otimes N}\otimes  \left[{X}_1 ({Z}_1^{(1)})^{N} {X}_2 \ldots {X}_N \right] \ket{\psi} \nonumber \\
& =& \ket{0}^{\otimes N}\otimes {Z}_1^{(0)}  \ket{\psi} + \ket{1}^{\otimes N}\otimes  \left[{Z}_1^{(0)}{X}_1 \ldots {X}_N \right] \ket{\psi},
\end{eqnarray}
where to obtain the second equality we exploited conditions (\ref{StabTilde})
for all $i=2,\ldots,N$, whereas the second equality follows from the fact that
$[{Z}_1^{(j)}]^2\ket{\psi}={Z}_1^{(j)}\ket{\psi}$ for $j=0,1$ and the
anticommutation relation (\ref{antiTilde1}). Using then (\ref{Stab1App}), the above
can be rewritten as 
\begin{eqnarray}
\Phi\left(\ket{+}^{\otimes N}\otimes \ket{\psi}\right) &=& \frac{1}{\sin 2\theta} \left[ \sin 2\theta  \ket{0}^{\otimes N}\otimes {Z}_1^{(0)} \ket{\psi} + \ket{1}^{\otimes N}\otimes ( {Z}_1^{(0)}\tilde{S}_1   - \cos 2\theta {Z}_1^{(0)} {Z}_1 )\ket{\psi} \right]  \nonumber \\
&=&\frac{1}{\sin 2\theta} \left[  \sin2 \theta \ket{0}^{\otimes N}  + (1 - \cos2 \theta) \ket{1}^{\otimes N} \right]\otimes {Z}_1^{(0)}  \ket{\psi} \nonumber\\
 &=& \ket{\mathrm{aux}}\otimes \ket{\mathrm{GHZ}_N (\theta )},
\end{eqnarray}
\end{widetext}
where $\ket{\mathrm{aux}} = (1/\cos \theta) {Z}_1^{(0)} \ket{\psi}$. To obtain the second equality we used the facts that $\tilde{S}_1$ stabilizes $\ket{\psi}$ and that ${Z}_1^{(0)}{Z}_1={Z}_1^{(0)}$, while the last one is a consequence of the two well-known trigonometric relations $\sin2\theta=2\sin\theta\cos\theta$ and 
$1-\cos2\theta=2\sin^2\theta$. This completes our self-testing statement.

\noindent \textit{Deriving the Bell inequality.} Now, what is left to show is that \eqref{eq:stabSOS} can indeed be satisfied and thus give rise to a non-trivial Bell inequality. We will see that this can be done by choosing the free angle $\mu$ and the $\alpha_i$ parameters accordingly. To do so, let us first compute the square of the stabilizing operators 
\begin{eqnarray}
\tilde{S}_1^2  &=&\frac{1}{2}\left(\frac{\sin^2 2\theta}{\sin^2\mu}+\frac{\cos^22\theta}{\cos^2\mu}\right)\mathbbm{1}\nonumber\\
&&+\frac{1}{4}\left(\frac{\sin^2 2\theta}{\sin^2\mu}-\frac{\cos^22\theta}{\cos^2\mu}\right)\{A_0^{(1)},A_1^{(1)}\}
\end{eqnarray}
and 
\begin{equation}
\tilde{S}_i^2 = \frac{1 }{2 \cos^2{\mu}}\mathbbm{1} - \frac{1 }{4 \cos^2{\mu}} \lbrace A^{(1)}_0 , A^{(1)}_1 \rbrace
\end{equation}
for $i = 2,\ldots,N$. With these identities the sum of squares 
\eqref{eq:stabSOS} can be expanded as
\begin{widetext}
\begin{eqnarray}
\sum_{i = 1}^N \alpha_i^2 ( \mathbbm{1} - \tilde{S}_i )^2 
& =& \sum_{i = 1}^N \alpha_i^2 \mathbbm{1} - 2 \sum_{i = 1}^N \alpha_i^2  \tilde{S}_i   + 
\frac{1}{2}\left[  \left( \frac{\sin^2{2\theta}}{ \sin^2{\mu}} + \frac{\cos^2{2\theta}}{ \cos^2{\mu}} \right)\alpha_1^2 + 
\frac{1}{ \cos^2{\mu}}\sum_{i=2}^N{\alpha_i^2}\right] \mathbbm{1} \\
&& + \frac{1}{4}  \left[ \left(  \frac{\sin^2{2\theta}}{ \sin^2{\mu}} - \frac{\cos^2{2\theta}}{ \cos^2{\mu}} \right)\alpha_1^2 
- \frac{1}{\cos^2{\mu}}  \sum_{i=2}^N{\alpha_i^2}\right]\lbrace A_0^{(1)} , A_1^{(1)} \rbrace.
\end{eqnarray}
\end{widetext}
Now, we want the term standing in front of the anticommutator to vanish. This can be done by setting: $\alpha_1^2 =\sqrt{2}(N - 1)$ and $\alpha_2^2 = \ldots = \alpha_N^2 =  \sqrt{2} $ and the angle $\mu$ so that $2\sin^2{\mu} = \sin^2{2\theta}$. This gives
\begin{widetext}
\begin{equation}
\sum_{i = 1}^N \alpha_i^2 ( \mathbbm{1} - \tilde{S}_i )^2 = 
2 \left\{2\sqrt{2} (N-1) \mathbbm{1} -  \left[ (N-1)\sqrt{2} \tilde{S}_1  + \sqrt{2}\sum_{i =2}^N  \tilde{S}_i\right] \right\},
\end{equation}
\end{widetext}
%
%
where we keep the $\sqrt{2}$ factor inside the curly brackets 
for further convenience. 
We can thus identify $\beta_Q = 2\sqrt{2} (N-1)$ 
and the remaining terms appearing on the left-hand side of the above 
as the Bell operator
\begin{equation}
\mathcal{B}=(N-1)\sqrt{2}\tilde{S}_1+\sqrt{2}\sum_{i=2}^N\tilde{S}_i.
\end{equation}
This, after substituting the expressions of the operators 
${X_i},{Z_i}$ in terms of arbitrary observables for all $i$, 
leads us to the following Bell inequality
\begin{widetext}
\begin{eqnarray}\label{eq:othertilted}
\mathcal{I}_{
\theta}&:=&(N-1)\langle(A_0^{(1)}+A_1^{(1)})A_0^{(2)}\ldots A_0^{(N)}\rangle  +(N-1)\frac{\cos{2\theta}}{\sqrt{1 + \cos^2{2\theta}}}(\langle A_0^{(1)}\rangle-\langle A_1^{(1)}\rangle)\nonumber\\
&&+\frac{1}{\sqrt{1 + \cos^2{2\theta}}}\sum_{i=2}^N \langle (A_0^{(1)}- A_1^{(1)})A_1^{(i)}\rangle\leq \beta_C,
\end{eqnarray}
\end{widetext}
where $\beta_C$ is the classical bound that we compute below.
For this purpose, we can optimize $\mathcal{I}_{\theta}$ over all the deterministic strategies corresponding to the different choices $A_{x_i}^{(i)} = \pm 1$. Given the simple form of the inequality, we can divide into the two subcases $A_0^{(1)} = \pm A_1^{(1)}$ and notice that the maximum is attained in the case in which the observables of the first party take opposite signs, which results in
\begin{equation}
\beta_C(\theta) =  2(N-1) \frac{1 + \cos{2\theta}}{\sqrt{1 + \cos^2{2\theta}}}.
\end{equation}
Notice that $\beta_C(\pi/4) = 2(N-1)$ and we recover the limit case of the GHZ state and inequality \eqref{eq:GHZineq}, while for $\theta=0$ one has $\beta_C(0) = 2\sqrt{2} (N-1)$ and there is obviously no quantum violation. Moreover one can see that $\beta_C(\theta)$ is a decreasing function of $\theta$ in the considered interval. This implies that \eqref{eq:othertilted} is violated for any value of $\theta$ in the given interval.
Interestingly, in the case $N = 2$ we obtain a self-testing inequality for the partially entangled two-qubit state that is inequivalent to the known tilted CHSH~\added{\cite{acinmassarpironio,bamps2015sum}}. .

\end{document}